\newcommand{\ZZ}{{\mathbb Z}}
\newcommand{\RR}{{\mathbb R}}
\newcommand{\supp}{\operatorname{supp}}
\newcommand{\eps}{\epsilon}
\newtheorem{theorem}{Theorem}
\newtheorem{corollary}{Corollary}[theorem]
\newtheorem{lemma}[theorem]{Lemma}
\newtheorem{definition}[theorem]{Definition}
\newcommand{\prlsection}[1]{\vspace{6pt}\noindent\textbf{#1.}---}
\begin{document}

\author{Yitao Feng~\begin{CJK*}{UTF8}{gbsn}(冯逸韬)\end{CJK*}}
\thanks{These authors contributed equally.}
\affiliation{International Center for Quantum Materials, School of Physics, Peking University, Beijing 100871, China}

\author{Hanyu Xue~\begin{CJK*}{UTF8}{gbsn}(薛寒玉)\end{CJK*}}
\thanks{These authors contributed equally.}
\affiliation{Department of Physics, Massachusetts Institute of Technology, Cambridge, Massachusetts 02139, USA}

\author{Yuyang Li~\begin{CJK*}{UTF8}{gbsn}(李雨阳)\end{CJK*}}
\thanks{These authors contributed equally.}
\affiliation{International Center for Quantum Materials, School of Physics, Peking University, Beijing 100871, China}

\author{\\Meng Cheng~\begin{CJK*}{UTF8}{gbsn}(程蒙)\end{CJK*}}
\affiliation{Department of Physics, Yale University, New Haven, CT 06511, USA}

\author{Ryohei Kobayashi~\begin{CJK*}{UTF8}{gbsn}(小林良平)\end{CJK*}}
\affiliation{School of Natural Sciences, Institute for Advanced Study, Princeton, NJ 08540, USA}

\author{Po-Shen Hsin~\begin{CJK*}{UTF8}{bsmi}(辛柏伸)\end{CJK*}}
\affiliation{Department of Mathematics, King’s College London, Strand, London WC2R 2LS, UK}

\author{Yu-An Chen~\begin{CJK*}{UTF8}{bsmi}(陳昱安)\end{CJK*}}
\email[E-mail: ]{yuanchen@pku.edu.cn}
\affiliation{International Center for Quantum Materials, School of Physics, Peking University, Beijing 100871, China}

\date{\today}
\title{Anyonic membranes and Pontryagin statistics}

\begin{abstract}
    Anyons, unique to two spatial dimensions, underlie extraordinary phenomena such as the fractional quantum Hall effect, but their generalization to higher dimensions has remained elusive. The topology of Eilenberg-MacLane spaces constrains the loop statistics to be only bosonic or fermionic in any dimension. In this work, we introduce the novel anyonic statistics for membrane excitations in four dimensions.
    Analogous to the $\mathbb{Z}_N$-particle exhibiting $\mathbb{Z}_{N\times \gcd(2,N)}$ anyonic statistics in two dimensions, we show that the $\mathbb{Z}_N$-membrane possesses $\mathbb{Z}_{N\times \gcd(3,N)}$ anyonic statistics in four dimensions.
    Given unitary volume operators that create membrane excitations on the boundary, we propose an explicit 56-step unitary sequence that detects the membrane statistics.
    We further analyze the boundary theory of $(5{+}1)$D 1-form $\mathbb{Z}_N$ symmetry-protected topological phases and demonstrate that their domain walls realize all possible anyonic membrane statistics.
    We then show that the $\mathbb{Z}_3$ subgroup persists in all higher dimensions.  
    In addition to the standard fermionic $\mathbb{Z}_2$ membrane statistics arising from Stiefel-Whitney classes, membranes also exhibit $\mathbb{Z}_3$ statistics associated with Pontryagin classes.
    We explicitly verify that the 56-step process detects the nontrivial $\mathbb{Z}_3$ statistics in 5, 6, and 7 spatial dimensions. Moreover, in 7 and higher dimensions, the statistics of membrane excitations stabilize to $\mathbb{Z}_{2} \times \mathbb{Z}_{3}$, with the $\mathbb{Z}_3$ sector consistently captured by this process.
\end{abstract}

\maketitle


\prlsection{Introduction}
The quantum statistics of excitations plays a central role across physics, underlying phenomena ranging from superfluids and superconductors to quantum Hall systems and beyond. In particular, condensation is controlled by statistics: bosons can condense into collective quantum states, whereas isolated fermions cannot~\cite{Landau1941Theory, Onsager1949Statistical, Ginzburg1950superconductivity, Bardeen1957Superconductivity, Gu2014supercohomology, Gaiotto2015SpinTQFTs, Bhardwaj:2016clt, Barkeshli:2021ypb}. Two spatial dimensions are special: particle exchange statistics extends beyond bosons and fermions to anyons, which underpin the fractional quantum Hall effect~\cite{Wilczek1982Quantum, Laughlin1983Anomalous, Stormer1999NobelLecture, Moore1991Nonabelions} and provide a route to fault-tolerant quantum computation~\cite{Freedman2002AModular, freedman2003topological, Kitaev2003Fault, Nayak2008NonAbelian}. These facts motivate a basic question: can ``anyon statistics'' be meaningfully generalized beyond two dimensions?

In higher dimensions, excitations can be extended objects such as loops and membranes. For loops, recent lattice definitions and computational frameworks indicate a sharp restriction: in $(3{+}1)$D $\mathbb{Z}_2$ gauge theories relevant to certain superconducting phases and linked to discrete gravitational anomalies~\cite{Thorngren:2014pza, Kobayashi2019gapped, Johnson-Freyd:2020twl, Hsin:2021qiy, CH21, FHH21, Tata2022anomalies}, loop self-statistics is always fermionic, yielding only a $\mathbb{Z}_2$ invariant with no higher analogue of anyons~\cite{kobayashi2024generalized, xue2025statistics}.

To move beyond this limitation, we focus on membrane excitations in four and higher spatial dimensions. We refine the algorithm of Ref.~\cite{kobayashi2024generalized} and derive a 56-step unitary sequence that defines a robust statistical invariant for membrane excitation with fusion group $G=\ZZ_2$. This construction is directly analogous to the $T$-junction process that characterizes particle exchange statistics~\cite{Levin2003Fermions}.
Remarkably, the 56-step process applies beyond $\ZZ_2$-membranes, capturing $\ZZ$-membranes\footnote{The fusion group $\ZZ$ corresponds to free excitations that do not fuse to the vacuum, i.e., there is no relation $a^n=1$, while every excitation has an inverse.} with arbitrary $U(1)$ phases in four dimensions, which we call \emph{anyonic membrane statistics}.
Moreover, this invariant persists in higher dimensions, where it stabilizes into a $\ZZ_3$ structure governed by the first Pontryagin class mod 3, which we refer to as \emph{Pontryagin statistics}, going beyond the familiar $\ZZ_2$ fermionic statistics associated with Stiefel-Whitney classes.

We connect this lattice invariant to higher-form anomalies by constructing explicit realizations on the boundaries of higher-form SPT phases: the membrane excitations arise as symmetry domain walls, and the Berry phase produced by the 56-step process admits a direct interpretation via anomaly inflow. A nontrivial value obstructs a symmetry-preserving short-range-entangled ground state and enforces nontrivial low-energy physics (e.g., symmetry breaking, gaplessness, or topological order), in close analogy with Lieb-Schultz-Mattis-type constraints. From the perspective of quantum error correction, such anomalies and statistics also constrain the realizable logical operations in higher-dimensional topological quantum error-correcting codes. We expect this operational lattice diagnostic to be useful across condensed matter and high-energy theory, as well as quantum information~\cite{Wang2014braiding, Jiang2014Generalized, Jian2014Layer, Wang2015Field, Wang2016Bulk, Ye2016Topological, Pavel2017Braiding, Chan2018Braiding, Wang2019Topological, Zhang2021Compatible, Zhang2022Topological, Barkeshli2023Codimension, Barkeshli2024higher, Barkeshli2024higherfermion, Sun2025CliffordQCA}.

The paper is organized as follows. We first revisit the lattice $T$-junction process for particle exchange and its relation to the cohomology of Eilenberg-MacLane spaces. We then introduce the 56-step unitary sequence for membrane excitations, prove that it defines a robust statistical invariant, and analyze how its output depends on spatial dimension. Finally, we construct explicit boundary realizations from higher-form SPT phases to demonstrate the predicted membrane statistics and close with a brief discussion of implications and open directions.

\prlsection{Particle statistics from the $T$-junction process} \label{sec:Revisiting the T-Junction process for particle statistics}
To illustrate our approach, we begin with the well-known case of particle statistics.  
In quantum mechanics, particle statistics are determined by the phase acquired when two identical particles are exchanged.  
On a lattice, however, the main challenge is to define such an exchange unambiguously, without being obscured by microscopic details. 

Consider particles placed at the vertices of the lattice $\partial\langle 0123 \rangle$ (the boundary of a 3-simplex):
\begin{equation}
    \vcenter{\hbox{\includegraphics[width=0.3\linewidth]{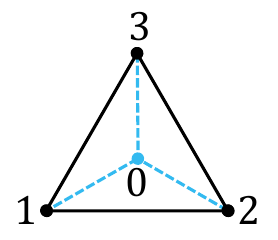}}}~~.
\label{eq:centered-triangle}
\end{equation}  
Throughout this work, we assume that all excitations are \textbf{invertible}, meaning each fuses with its inverse to the vacuum and admits no multiple fusion channels.  
A \textbf{configuration state} $|a\rangle$ on a lattice is specified by a $0$-chain, namely a formal sum of vertices with integer coefficients encoding particle occupancy (negative integers denote antiparticles):  
\begin{equation}
    a := \sum_v a_v \langle v\rangle , \quad a_v \in \ZZ~.
\label{a definition}
\end{equation}
Let $U(s)$ denote the unitary string operator that moves a particle along a path $s$, acting as  
\begin{equation}
    U(s)\,|a\rangle \;\propto\; |a + \partial s\rangle~.
\label{eq:Us_on_a}
\end{equation}  
For example, when $s = e_{ij}$, $U(e_{ij})$ creates a particle at vertex $j$ and an antiparticle at vertex $i$:  
\begin{equation}
    a + \partial s \;=\; a + \langle j \rangle - \langle i \rangle~.
\end{equation}  
The particle exchange statistics are obtained from the sequence of operators~\cite{Levin2003Fermions}:\footnote{For brevity, we denote $U(e_{ij})$ by $U_{ij}$.}
\vspace{-0.25em}
\begin{equation}
    \Theta_T := U_{02} U_{03}^\dagger U_{01} U_{02}^\dagger U_{03} U_{01}^\dagger~.
    \vspace{-0.25em}
\label{eq:T-junction in the introduction}
\end{equation}
This sequence yields an overall phase factor that is independent of the initial state~\cite{FHH21, kobayashi2024generalized, xue2025statistics}.
When $\Theta_T$ acts on the initial state  
\(
    \left|\hbox{ \raisebox{-1ex}{\includegraphics[width=.6cm]{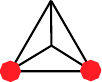}}} \right\rangle,
\)  
with particles at vertices 1 and 2, the resulting 6-step $T$-junction process can be visualized as
\vspace{-0.5em}
\begin{equation*}
    \vcenter{\hbox{\includegraphics[width=0.98\linewidth]{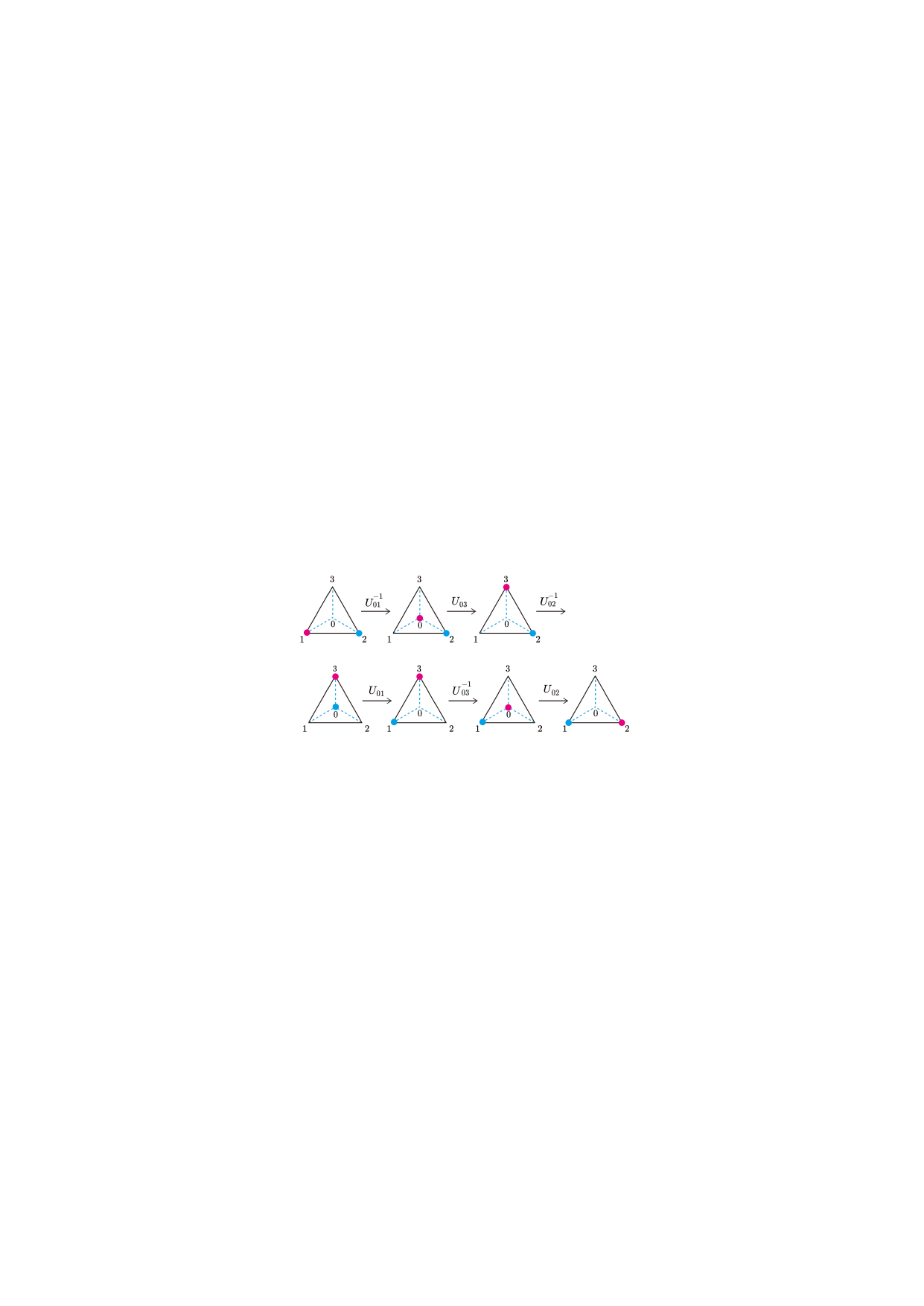}}}.
    \vspace{-1em}
\label{fig:T-junction}
\end{equation*}
This construction reproduces the particle exchange statistics while canceling local phase ambiguities, such as redefining $U_{01} \rightarrow e^{i\phi} U_{01}$.

The key point is that the $T$-junction process $\Theta_T$ is not just a property of the local lattice construction.  
It reflects deeper, global consistency conditions that are naturally captured by a field-theoretic description in one higher dimension.  
More precisely, the possible values of $\Theta_T$ are classified by \emph{invertible topological quantum field theories} (TQFTs), which encode 't~Hooft anomalies~\cite{Liujun2024Symmetriesanomalies, seifnashri2025disentangling, kapustin2025higher, kapustin2025higher2, Kapustin2025Anomalous, kawagoe2025anomaly, Feng2025HigherForm}.

The value of $\Theta_T$ depends on the spatial dimension of the underlying lattice.  
In 2 dimensions, particles can be \emph{anyons}, with exchange statistics $\Theta_T$ taking any $U(1)$ phase factor. 
In three or more spatial dimensions, however, particles are restricted to being either \emph{bosons} ($\Theta_T=1$) or \emph{fermions} ($\Theta_T=-1$).
In $d$ spatial dimensions, particles generate a $(d{-}1)$-form symmetry, and their statistics correspond to an anomaly of this higher-form symmetry.
Mathematically, this can be formulated as
\begin{eqs}
    H^4(B^2 \ZZ, \RR/\ZZ) &= \RR/\ZZ~, &\quad& \text{for } d=2,\\[3pt]
    H^{d+2}(B^{d} \ZZ, \RR/\ZZ) &= \ZZ_{2}~, &\quad& \text{for } d \geq 3,
\label{eq:cohomology of Z-Eilenberg-MacLane (particle) }
\end{eqs}
where $B^n G := K(G,n)$ denotes the Eilenberg-MacLane space~\cite{Eilenberg:1946}, which serves as the classifying space of the $(n{-}1)$-form $G$ symmetry.  

We could also consider $G$-particles with fusion group $G=\ZZ_N$, meaning that $N$ particles fuse to the vacuum.
In this case, the phase of $\Theta_T$ is quantized, corresponding to~\cite{Eilenberg:1946}  
\begin{eqs}
    H^4(B^2\ZZ_N, \RR/\ZZ) &= \ZZ_{N \times \gcd(2,N)}~, \!\!\! & \text{for } d=2,\\[3pt]
    H^{d+2}(B^d\ZZ_N, \RR/\ZZ) &= \ZZ_{\gcd(2,N)}~, & \text{for } d \geq 3,
\label{eq:cohomology of Eilenberg-MacLane (particle)}
\end{eqs}
where $\gcd(2,N)$ denotes the greatest common divisor of $2$ and $N$.
For example, for $\ZZ_2$-particles in 2 spatial dimensions, the anomaly corresponds to $(3{+}1)$D 1-form $\ZZ_2$ symmetry-protected topological (SPT) phases, equivalently described by an invertible TQFT in the $(3{+}1)$D bulk, classified by  
\begin{equation}
    H^4(B^2\ZZ_2, \RR/\ZZ) = \ZZ_4~,
\end{equation}  
which allows exchange statistics $\Theta_T = 1,\, i,\,-1,\,-i$, with $i$ and $-i$ corresponding to the semion and anti-semion, respectively~\cite{Levin2005String, Levin2012Braiding}.  
This matches the classification of Abelian anyon theories in Ref.~\cite{Wang2020InAbelian}, where a quadratic function from $G$ to $\RR/\ZZ$ determines the anyon theory.  
Such quadratic functions are precisely described by $H^4(B^2G, \RR/\ZZ)$~\cite{Eilenberg1953OnTG, Eilenberg1954OnTG, Kapustin2014theta, Kapustin2017Higher, Delcamp2019On}.  

To construct lattice models that realize these anyon statistics in two spatial dimensions, we study the boundary of $(3{+}1)$D 1-form $\ZZ_N$ SPT phases. 
Ref.~\cite{lokman2020lattice} gave lattice realizations of all such phases, corresponding to cocycle representatives in $H^{4}(B^{2}\ZZ_N,\RR/\ZZ)$:
\begin{equation*}
    \begin{cases}
    \dfrac{1}{N}\, B_2 \cup B_2~, & N \not\equiv 0 \pmod{2}~,\\[6pt]
    \dfrac{1}{2N}\!\left(B_2 \cup B_2 + B_2 \cup_1 \delta B_2 \right)~, \!\! & N \equiv 0 \pmod{2}~,
    \end{cases}
\end{equation*}
with explicit boundary constructions.
In the above expression, $B_2$ is a closed cochain modulo $N$.
The (generalized) higher cup products are reviewed in SM~Sec.~\ref{app:Higher cup products, May–Steenrod structure, and Steenrod reduce powers}.
It has been shown that particles living on these boundaries exhibit exchange statistics  
\begin{equation}
    \Theta_T =
    \begin{cases}
        \exp\left(\frac{2\pi i}{N}\right)~, & N \not\equiv 0 \pmod{2}~,\\[3pt]
        \exp\left(\frac{2\pi i}{2N}\right)~, & N \equiv 0 \pmod{2}~.
    \end{cases}
\end{equation}
This computation confirms that the $T$-junction process serves as a valid statistical procedure that detects all anomalies $H^4(B^2\ZZ_N, \RR/\ZZ)$ predicted in Eq.~\eqref{eq:cohomology of Eilenberg-MacLane (particle)}.
This analysis extends to higher dimensions~\cite{Chen:2017fvr, Chen:2018nog, Chen2020Exactbosonization}, where only bosons and fermions appear.
The anomaly is given by  
\begin{equation}
    \frac{1}{2} B_d \cup_{d-2} B_d \in H^{d+2}(B^d\mathbb{Z}_N, \mathbb{R}/\mathbb{Z})~, 
\end{equation}
for $N \equiv 0 \pmod{2}$, which is equivalent to $\tfrac{1}{2} w_2 \cup B_d$ with $w_2$ the second Stiefel-Whitney class~\cite{milnor1974characteristic}.

In contrast, loop statistics have been studied in Ref.~\cite{FHH21, CH21}, and always yield fermionic $\mathbb{Z}_2$ statistics in $d \geq 3$, classified by  
\begin{equation}
    H^{d+2}(B^{d-1}\mathbb{Z}_N,\mathbb{R}/\mathbb{Z}) = \mathbb{Z}_{\gcd(2,N)}, \quad d \geq 3~,
\end{equation}
with a representative cocycle (for $N\equiv 0 \pmod{2}$) given by
\begin{equation*}
    \frac{1}{2} Sq^2Sq^1 B_{d-1} =\frac{1}{2} \left(\frac{\delta B_{d-1}}{2}\right) \cup_{d-2} \left(\frac{\delta B_{d-1}}{2}\right)~,
\end{equation*}
equivalent to $\tfrac{1}{2} w_3 \cup B_{d-1}$, where $w_3$ is the third Stiefel-Whitney class.
On the other hand, since  
\begin{equation}
    H^{d+2}(B^{d-1}\mathbb{Z},\mathbb{R}/\mathbb{Z}) = 0, \quad d\geq2~,
\end{equation}
generic $\mathbb{Z}$-loops admit no nontrivial statistics. This sharply differs from particles, where a single statistical process governs both $G=\mathbb{Z}$ and $\mathbb{Z}_N$.

\prlsection{Membrane statistics from the 56-step process}\label{sec:56-Step unitary process for membrane statistics}
Now, we introduce a 56-step unitary sequence that characterizes membrane statistics in spatial dimensions $d \geq 4$.  
We begin by defining the Hilbert space supporting invertible membrane excitations in 4 spatial dimensions, noting that the construction naturally embeds into higher dimensions.  
Analogous to the triangle with a center vertex in Eq.~\eqref{eq:centered-triangle}, we consider the lattice given by the boundary of a 5-simplex $\langle 012345 \rangle$. 
Configurations of membrane excitations are described by 2-chains
\footnote{Membrane excitations are created by volume operators, so they are actually described by $2$-boundaries.}, i.e., formal sums of faces with integer coefficients:
\begin{equation}
    a = \sum_{f=\langle ijk \rangle} c_f \, \langle ijk \rangle~, 
    \quad c_f \in \ZZ~.
\end{equation}
For each tetrahedron $\langle ijkl \rangle$, we associate a volume operator $U_{ijkl}$ that creates membrane excitations on its four boundary faces:
\begin{equation}
    \langle jkl \rangle - \langle ikl \rangle + \langle ijl \rangle - \langle ijk \rangle.
\end{equation}
From these unitary volume operators, we propose the 56-step unitary sequence
\vspace{-0.5em}
\begin{eqs}
    &\mu_{56}:= \\
    &U_{0235}^\dagger U_{0345}^\dagger U_{0135}^\dagger U_{0134} U_{0245} U_{0134}^\dagger U_{0245}^\dagger U_{0135} \\
    &U_{0125}^\dagger U_{0123} U_{0345} U_{0234}^\dagger U_{0134} U_{0345}^\dagger U_{0234} U_{0134}^\dagger \\
    &U_{0123}^\dagger U_{0125} U_{0245} U_{0123} U_{0134} U_{0135}^\dagger U_{0245}^\dagger U_{0234}^\dagger \\
    &U_{0124}^\dagger U_{0245} U_{0234} U_{0135} U_{0234}^\dagger U_{0245}^\dagger U_{0345} U_{0235} \\
    &U_{0135}^\dagger U_{0345}^\dagger U_{0124} U_{0345} U_{0135} U_{0134}^\dagger U_{0123}^\dagger U_{0234} \\
    &U_{0145}^\dagger U_{0234}^\dagger U_{0123} U_{0124}^\dagger U_{0134} U_{0235}^\dagger U_{0134}^\dagger U_{0124} \\
    &U_{0234} U_{0245} U_{0345}^\dagger U_{0123}^\dagger U_{0345} U_{0245}^\dagger U_{0235} U_{0145}~.
\label{eq:56-step_unitary_process}
\end{eqs}
For illustration, the intermediate states of $\mu_{56}$ acting on the vacuum are presented in SM~Sec.~\ref{app:Intermediate states in the 56-step process}.
This process is obtained by improving the Smith normal form (SNF) algorithm of Ref.~\cite{kobayashi2024generalized},
with computational details given in SM~Sec.~\ref{app:Computational derivation}.

Now we set aside the details of the search algorithm and instead treat $\mu_{56}$ as an explicit candidate process to be verified. We first show that $\mu_{56}$ defines a lattice invariant, robust under local changes of the microscopic operators.
A process satisfying the following two conditions is said to be \textbf{statistical}:
\begin{enumerate}
    \item $\langle a|\mu_{56} |a\rangle$ does not depend on the initial configuration $a$. 

    \item If $U_{ijkl}$ is replaced by another operator $U_{ijkl}'$ that still satisfies Eq.~\eqref{eq:Us_on_a}, that is, $U_{ijkl}'$ differs from $U_{ijkl}$ by a $\ket{a}$-dependent phase factor, then $\mu_{56}$ remains unchanged.
\end{enumerate}
These conditions seem to be very strong, but according to Theorem VI.4, VI.6, and VI.11 in Ref.~\cite{xue2025statistics}\footnote{Eq.~\eqref{eq:Us_on_a}, referred to as the configuration axiom, plays a central role in the proof.}, 
these conditions are equivalent to requiring that \emph{any local process around each vertex cancels out} (also see Refs.~\cite{FHH21, kobayashi2024generalized}).
For example, around a vertex $v$, if applying $U$ changes the state from $a|_v$ to $a'|_v$, then at a later step one must apply $U^{-1}$ to return from $a'|_v$ to $a|_v$.  
Here $a|_v$ denotes the local configuration in the neighborhood of $v$, which is unaffected by membranes on faces far from $v$.  
As shown in SM~Sec.~\ref{app:Proof of the 56-step unitary as a statistical process}, the 56-step process acting on the vacuum satisfies this \emph{local cancellation criterion}, leading to the following theorem:  
\begin{theorem}
    $\mu_{56}$ is a statistical process for $\ZZ$-membranes in 4 spatial dimensions. 
\end{theorem}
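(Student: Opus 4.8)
The plan is to reduce the theorem, as the text already suggests, to verifying the \textbf{local cancellation criterion}: by Theorems VI.3, VI.6, and VI.10 of Ref.~\cite{xue2025statistics}, showing that $\mu_{56}$ is a statistical process (both state-independence of $\langle a|\mu_{56}|a\rangle$ and invariance under $U_{ijkl}\to U'_{ijkl}$) is equivalent to showing that, in the neighborhood of every vertex $v$ of the $5$-simplex boundary, the subsequence of volume operators whose boundary tetrahedra touch $v$ multiplies to the identity \emph{as a word}, i.e.\ the net action on the local configuration $a|_v$ is trivial and the operators pair up into cancelling $U/U^{-1}$ blocks at matching local states. So the core of the proof is a finite, vertex-by-vertex bookkeeping check.

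Concretely, I would proceed as follows. First, fix the combinatorial data: the $5$-simplex $\langle 012345\rangle$ has six vertices, and the twenty tetrahedra $\langle ijkl\rangle$ appearing in $\mu_{56}$ are exactly those containing vertex $0$ (so only vertices $1,\dots,5$ play an asymmetric role). For each vertex $v\in\{0,1,\dots,5\}$, extract the ordered subword $\mu_{56}^{(v)}$ of $\mu_{56}$ consisting of those $U_{ijkl}^{\pm}$ with $v\in\{i,j,k,l\}$; for $v=0$ this is the whole word, while for $v\in\{1,\dots,5\}$ it is a proper subword. Second, track the local state: starting from a symbolic configuration $a|_v$ (it suffices to work with the vacuum, as the criterion is about cancellation of local moves, not the global value), apply the operators of $\mu_{56}^{(v)}$ in order using Eq.~\eqref{eq:Us_on_a}, recording the $2$-chain supported on faces incident to $v$ after each step. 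Third, verify that (a) the running local chain returns to its starting value at the end of $\mu_{56}^{(v)}$, and (b) every operator in $\mu_{56}^{(v)}$ is matched by a later inverse applied at the \emph{same} intermediate local configuration, so that all local phase ambiguities $U_{ijkl}\to e^{i\phi(a)}U_{ijkl}$ cancel. Appendix~\ref{app:Proof of the 56-step unitary as a statistical process} is cited as carrying this out for the vacuum, and Appendix~\ref{app:Intermediate states in the 56-step process} tabulates the intermediate states, so the proof body can invoke those explicit computations.

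I would then assemble these checks: once local cancellation holds at all six vertices, Theorems VI.3/VI.6/VI.10 of Ref.~\cite{xue2025statistics} immediately give both defining properties of a statistical process, and since the construction is set up for $2$-chains with $\ZZ$ coefficients on $\partial\langle 012345\rangle\cong\RR^4$, this is precisely the statement that $\mu_{56}$ is a statistical process for $\ZZ$-membranes in $4$ spatial dimensions. One subtlety worth flagging explicitly: unlike the $6$-step T-junction, the word $\mu_{56}$ is long and its pairing structure is not visually obvious, so the matching of each $U$ with its cancelling inverse \emph{at the correct local state} must be done carefully and is not merely a matter of counting signed exponents — an operator and its inverse can appear with the right total exponent yet fail to cancel if they act at different intermediate configurations.

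The main obstacle I anticipate is exactly this verification of condition (b) at the vertices where $\mu_{56}^{(v)}$ is the full $56$-letter word (i.e.\ $v=0$, and effectively the generic vertex): one must confirm that the $56$ steps organize into properly nested cancelling blocks around vertex $0$, which is a delicate ordered-word computation rather than an abelian exponent-sum check. This is the step that genuinely uses the specific form of $\mu_{56}$ produced by the improved SNF algorithm of Ref.~\cite{kobayashi2024generalized}, and it is where I would lean most heavily on the explicit intermediate-state tables in Appendix~\ref{app:Intermediate states in the 56-step process}; the checks at vertices $1,\dots,5$, where the relevant subword is shorter, should be comparatively routine corollaries of the same bookkeeping.
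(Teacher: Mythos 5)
Your proposal follows essentially the same route as the paper: reduce to the local cancellation criterion of Ref.~\cite{xue2025statistics}, and then, vertex by vertex, pair up each $\theta(U(s),a|_v)$ contribution with a matching $\theta(U(s)^\dagger,a'|_v)$ via $\theta(U(s)^\dagger,a)=-\theta(U(s),a-\partial s)$, using the intermediate-state table of Appendix~\ref{app:Intermediate states in the 56-step process} to carry out the bookkeeping. Two small corrections: the tetrahedra appearing in $\mu_{56}$ number ten (those $\langle 0ijk\rangle$ with $\{i,j,k\}\subset\{1,\dots,5\}$), not twenty; and the pairing need not be ``nested blocks'' with a strictly later inverse --- what the criterion actually requires (and what the paper checks for the example $(v,s)=(5,\langle 0345\rangle)$) is that each $P_v\theta(U(s),\cdot)$ term is cancelled by a $P_v\theta(U(s)^\dagger,\cdot)$ term whose projected configuration is shifted by $\partial s|_v$, regardless of where the two appear in the word.
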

In other words, the process yields a (possibly trivial) global $U(1)$ phase on any initial state, analogous to the $T$-junction process for particle exchange, and remains invariant under local deformations of operators or states.
In higher dimensions, one can embed the $\partial \langle 012345 \rangle$ into the ambient space, and the local cancellation criterion remains automatically satisfied. Moreover, by definition, the cancellation condition for $\ZZ_N$ membranes is strictly weaker than that for $\ZZ$ membranes. We therefore obtain the following corollary:  
\begin{corollary}
    $\mu_{56}$ is a statistical process for $\ZZ$- or $\ZZ_N$-membranes in all spatial dimensions $d \geq 4$.
\end{corollary}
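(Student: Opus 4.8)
The corollary claims that $\mu_{56}$ remains a statistical process when we (a) pass from $\ZZ$-membranes to $\ZZ_N$-membranes, and (b) embed the $5$-simplex boundary into any higher-dimensional ambient lattice. My plan is to reduce both extensions to the previously established theorem (that $\mu_{56}$ is statistical for $\ZZ$-membranes in $4$ spatial dimensions) via the local cancellation criterion, which the excerpt tells us is equivalent to being statistical (Theorems VI.3, VI.6, VI.10 of Ref.~\cite{xue2025statistics}). So the whole argument is: verify that the local cancellation criterion is inherited under each of the two operations.

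For the dimension lift, I would argue as follows. The $56$-step sequence only ever uses volume operators $U_{ijkl}$ with indices in $\{0,1,2,3,4,5\}$, so the entire process takes place inside a copy of $\partial\langle 012345\rangle$ regardless of the ambient dimension $d\geq 4$. Embed this copy into the $d$-dimensional lattice. For a vertex $v$ among $\{0,\dots,5\}$, the local configuration $a|_v$ now lives in a possibly larger star, but the only faces ever touched by $\mu_{56}$ near $v$ are those inside the embedded simplex boundary, and the sequence of local moves around $v$ is verbatim the same sequence that was shown to cancel in the $4$D proof (Appendix~\ref{app:Proof of the 56-step unitary as a statistical process}); the extra faces in the star of $v$ are simply untouched, so cancellation around $v$ still holds. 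For any ambient vertex $w\notin\{0,\dots,5\}$, no operator in $\mu_{56}$ acts near $w$ at all, so the local cancellation around $w$ is trivially satisfied. Hence the criterion holds at every vertex, and by the equivalence $\mu_{56}$ is statistical in dimension $d$.

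For the $\ZZ\to\ZZ_N$ reduction, the key observation — already flagged in the excerpt — is that the local cancellation condition for $\ZZ_N$-membranes is strictly weaker than for $\ZZ$-membranes. Concretely, the Hilbert space for $\ZZ_N$-membranes is a quotient: configurations are $2$-chains with $\ZZ_N$ coefficients, i.e.\ there is a surjection from $\ZZ$-configurations obtained by reducing each $c_f$ mod $N$, and the volume operators $U_{ijkl}$ descend to this quotient compatibly with Eq.~\eqref{eq:Us_on_a} (now read mod $N$). A local move around $v$ that returns $a|_v$ to itself over $\ZZ$ also returns the mod-$N$ reduction to itself; thus every cancellation established in the $\ZZ$ case maps forward to a cancellation in the $\ZZ_N$ case. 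One should also check the converse direction of the equivalence isn't needed — we only need that \emph{local cancellation} $\Rightarrow$ \emph{statistical}, which is exactly the direction supplied by Ref.~\cite{xue2025statistics}. Combining this reduction with the dimension lift handles all four cases ($\ZZ$ or $\ZZ_N$; $d=4$ or $d>4$) at once.

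The step I expect to require the most care is making the ``strictly weaker'' comparison between the $\ZZ_N$ and $\ZZ$ local cancellation conditions fully rigorous: one must confirm that reducing coefficients mod $N$ is genuinely a map of Hilbert-space configurations intertwining the action of every $U_{ijkl}$ (including the phase ambiguities allowed in condition~2 of the definition of statistical), so that a canceling pattern over $\ZZ$ cannot ``unravel'' after quotienting. Once that compatibility is in hand — it is essentially the statement that $2$-chains mod $N$ form a quotient chain complex and the volume operators are natural with respect to it — the corollary follows immediately from the theorem, with the dimension lift being the routine observation that untouched faces and untouched vertices contribute nothing.
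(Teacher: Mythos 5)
Your proposal is correct and follows essentially the same two-step reduction the paper uses: it proves the corollary by (i) embedding $\partial\langle 012345\rangle$ into the higher-dimensional ambient lattice and observing that the local cancellation criterion holds vacuously at vertices outside $\{0,\dots,5\}$ and verbatim at vertices inside, and (ii) noting that the $\ZZ_N$ cancellation condition (which only constrains sums over mod-$N$ equivalence classes $[\alpha]$ of local configurations, cf.~Appendix~\ref{app:Computational derivation}) is implied by the $\ZZ$ condition (which constrains each $\alpha$ individually). Your version merely spells out the two observations the paper states in a single sentence each.
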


\prlsection{Statistics and anomalies}
Refs.~\cite{kobayashi2024generalized, xue2025statistics} have extended the framework of the $T$-junction process to include higher-dimensional excitations. It is conjectured that for $p$-dimensional excitations with fusion group $G$ in $d$ spatial dimensions, the anomaly is classified by $H^{d+2}(B^{d-p}G,\RR/\ZZ)$. The table of $H^{d+2}(B^{d-2}\ZZ,\RR/\ZZ)$ indicates that the self-statistics of membranes is richer, given by~\cite{Eilenberg:1946} 
\begin{equation}
\begin{aligned}
    H^{6}(B^{2}\ZZ,\RR/\ZZ)     &= \RR/\ZZ~,              &\!\!& \text{for } d=4~, \\[1pt]
    H^{7}(B^{3}\ZZ,\RR/\ZZ)     &= \ZZ_3~,                &\!\!& \text{for } d=5~, \\[1pt]
    H^{8}(B^{4}\ZZ,\RR/\ZZ)     &= \ZZ_3 \times \RR/\ZZ~, &\!\!& \text{for } d=6~, \\[1pt]
    H^{d+2}(B^{d-2}\ZZ,\RR/\ZZ) &= \ZZ_3 \times \ZZ_2,    &\!\!& \text{for } d \ge 7~.
\end{aligned}
\label{eq:cohomology of Z-Eilenberg-MacLane (membrane)}
\end{equation}
For $d \geq 7$, the cohomology stabilizes to $\ZZ_2 \times \ZZ_3$, a consequence of the topology of Eilenberg-MacLane spaces (see, e.g., Theorem~13.2.2 of Ref.~\cite{LodayCyclicHomology}). One can also give a geometric intuition for why the critical dimension $d=7$ appears. The statistical process is constructed from local volume operators supported on the $3$-skeleton $X_3$ (the subcomplex of simplices of dimension $\le 3$). By a classical embedding theorem in simplicial topology, any finite $3$-dimensional simplicial complex embeds into $S^7$~\cite{MunkresTopology}. Hence, once $d\ge 7$, any local configuration relevant to membrane statistics can be realized inside a triangulated $S^d$, and no essentially new local patterns appear as $d$ increases. This provides intuition for why the membrane statistics stabilizes for $d\ge 7$ (more generally, one expects stabilization for $d\ge 2p+3$ for $p$-dimensional excitations).

In the rest of this paper, we prove that the 56-step process $\mu_{56}$ detects the $\RR/\ZZ \cong U(1)$ invariant in 4 dimensions and the $\ZZ_3$ sector in higher dimensions.
So far, we have shown that $\mu_{56}$ is a statistical process, but it could still be trivial. To establish nontriviality of $\mu_{56}$, explicit lattice models are constructed: specifying the operators $U_t$ and substituting them into the sequence shows that $\mu_{56}$ yields a nontrivial phase, e.g., $e^{2\pi i/3}$ in 5 or higher spatial dimensions. 
To identify such models, we examine the anomaly associated with $\ZZ_N$ membranes.
Analogous to Eq.~\eqref{eq:cohomology of Eilenberg-MacLane (particle)}, the relevant cohomology groups of Eilenberg-MacLane spaces are:
\begin{equation}
\begin{aligned}
    H^{6}(B^{2}\ZZ_N,\RR/\ZZ) &= {\color{red}\ZZ_{N\times \gcd(3,N)}}~, \\[2pt]
    H^{7}(B^{3}\ZZ_N,\RR/\ZZ) &= {\color{red}\ZZ_{\gcd(3,N)}} \times {\color{blue}\ZZ_N}~, \\[2pt]
    H^{8}(B^{4}\ZZ_N,\RR/\ZZ) &= {\color{red}\ZZ_{\gcd(3,N)}} \times {\color{blue}\ZZ_{N\times \gcd(2,N)}}~, \\[2pt]
    H^{d+2}(B^{d-2}\ZZ_N,\RR/\ZZ) &= {\color{red}\ZZ_{\gcd(3,N)}} \times {\color{blue}\ZZ_{\gcd(2,N)}}~,
\end{aligned}
\label{eq:cohomology of Eilenberg-MacLane (membrane)}
\end{equation}
where the last line applies for $d\ge 7$. We will construct explicit lattice models (with additional details in SM~Sec.~\ref{app: Membrane statistics in d>=4 spatial dimensions}) and establish the following theorem:
\begin{theorem}
The 56-step unitary process defined in Eq.~\eqref{eq:56-step_unitary_process} detects the anomaly classes marked in red in Eq.~\eqref{eq:cohomology of Eilenberg-MacLane (membrane)}:
\begin{equation*}
    \mu_{56} \in
    \begin{cases}
        U(1)~, & \text{for~} d=4~, \\[3pt]
        \exp\left(\frac{2\pi i}{3}p\right)~,\quad p=0,1,2, & \text{for~} d\geq5~.
    \end{cases}
\end{equation*}
\end{theorem}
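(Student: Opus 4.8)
The plan is to prove the theorem by explicit construction: build lattice Hamiltonians for the $(d{+}1){+}1$-dimensional 1-form (more generally $(d{-}2)$-form) $\ZZ_N$ SPT phases whose bulk effective action is the cocycle representing the red-marked class in $H^{d+2}(B^{d-2}\ZZ_N,\RR/\ZZ)$, then write down the boundary volume operators $U_{ijkl}$ (and their higher-dimensional analogues) that create the membrane excitations, substitute them into $\mu_{56}$, and compute the resulting $U(1)$ phase directly. For the $\ZZ_3$-type (red) sector the representative cocycle is the one built from the first Pontryagin class mod $3$: schematically $\tfrac{1}{3}\mathcal{P}(B_{d-2})$ where $\mathcal{P}$ denotes the cup-product formula for $p_1 \bmod 3$ acting on the $\ZZ_N$-valued $(d{-}2)$-cochain $B_{d-2}$ (for $d=4$ this is the self-linking-type expression $\tfrac{1}{N}B_2\cup B_2$-analogue at the next level, $\tfrac{1}{N\gcd(3,N)}$ times the appropriate higher-cup combination). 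I would first set $N$ a multiple of $3$ to isolate the red factor, record the explicit cochain-level formula using the higher cup products $\cup_i$ and the May--Steenrod structure reviewed in the appendix, and check it is a genuine $\RR/\ZZ$ cocycle closed under $\delta$.

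Next I would carry out the boundary-lattice construction following the template of Ref.~\cite{lokman2020lattice} (and its higher-form generalizations): the $(d{+}1){+}1$D SPT on a manifold with boundary has a gauge non-invariance localized on the boundary, which is cancelled by a boundary unitary; the membrane-creating volume operators $U_{ijkl}$ are precisely the local terms that implement the gauge transformation of the boundary $B$-field by a coboundary supported on one tetrahedron. Concretely, $U_{ijkl}$ multiplies the state by the phase $\exp\!\big(2\pi i \int \text{(variation of the SPT action under } B \to B + \delta\langle ijkl\rangle^\vee)\big)$ together with the shift $a \to a + \partial\langle ijkl\rangle$. Once these operators are pinned down on $\partial\langle 012345\rangle$ (embedded into the $d$-dimensional spatial lattice for $d\ge 5$), the $56$-step product $\mu_{56}$ collapses — because it is already known to be a statistical process, the answer is a state-independent phase, so it suffices to evaluate on the vacuum using the intermediate-state list in the appendix. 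The phase is a sum of $56$ local contributions, each a value of the cocycle on a small configuration, and the total should telescope to the generator $\exp(2\pi i/3)$ (or its powers, tracking the $p$-dependence linearly in the SPT label).

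For the $d=4$ case I would run the same computation with general $N$ and with the $U(1)$-valued coefficient turned on, showing that as the SPT data ranges over $\ZZ_{N\times\gcd(3,N)}$ (and $N\to\infty$ recovers all of $\RR/\ZZ$) the value of $\mu_{56}$ sweeps out the full circle; this is the ``anyonic membrane statistics'' claim. A cleaner, more conceptual alternative I would also pursue in parallel: identify $\mu_{56}$ abstractly as a cocycle in $C^{d+2}(B^{d-2}\ZZ,\RR/\ZZ)$ via the correspondence of Ref.~\cite{xue2025statistics} between statistical processes and group-cohomology classes, then compute its image under the known decomposition of $H^{d+2}(B^{d-2}\ZZ_N,\RR/\ZZ)$ into Stiefel--Whitney (blue, $\ZZ_2$) and Pontryagin (red, $\ZZ_3$) pieces; since $\mu_{56}$ was engineered by the SNF algorithm to track the $\gcd(3,N)$ summand, showing it pairs nontrivially with the Pontryagin generator and trivially with the rest pins it down. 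Either route reduces the theorem to a finite, if tedious, cochain-level verification.

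The main obstacle is the bookkeeping in the middle step: correctly extracting the boundary action of the $(d{+}1){+}1$D SPT cocycle and verifying that the $56$ local phase contributions sum to exactly $1/3$ (not $2/3$, not $0$) requires care with orientations, with the higher cup product combinatorics on the $5$-simplex, and with the precise normalization of the Pontryagin-class cochain formula mod $3$ — a sign or a factor-of-two error anywhere collapses the argument. I expect the $\ZZ_3$ vs.\ $\ZZ_2$ separation to be the subtle point: one must confirm that $\mu_{56}$ is genuinely insensitive to the Stiefel--Whitney (blue) classes, i.e.\ that feeding in the $\tfrac12 w_3 \cup B_{d-2}$-type SPT gives $\mu_{56}=1$, so that the process is a clean detector of Pontryagin statistics alone. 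Handling this cleanly will likely use the fact, already available from the local cancellation analysis, that $\mu_{56}$ annihilates any process built purely from $\ZZ_2$-torsion data because of the specific $56$-step structure chosen by the SNF reduction.
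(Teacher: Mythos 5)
Your primary route matches the paper's: build the higher-form SPT bulk from the red (Pontryagin/higher Pontryagin power) cocycle, extract the boundary membrane-creating operators from the gauge variation of the SPT action (Appendix F's $\Phi[b]$, $\Theta[B,h]$, $L[b,h]$ machinery), substitute into $\mu_{56}$, and verify the phase computationally. The paper does exactly this: Section 4.1 for $d=4$ with $T[B_2]=\tfrac1N B_2^3$, and Section 4.2 plus Appendix F for the $d\ge 5$ cases using an algorithm that computes $\Theta[B,h]$ from $T[B]$ and evaluates $\mu_{56}$ on $\Delta_{D-1}$. Your alternative ``pairing'' route is also in the same spirit as Ref.~\cite{xue2025statistics}, which the paper invokes.

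However, your explanation for why $\mu_{56}$ is insensitive to the blue classes is not quite right, and the subtlety matters. You attribute it to ``$\ZZ_2$-torsion data'' and a special feature of the SNF reduction, but the blue classes are not all $\ZZ_2$-torsion --- in $d=5$ the blue factor is $\ZZ_N$ (generated by $\tfrac{1}{N^2}B_3\cup\delta B_3$), and in $d=6$ it is $\ZZ_{N\times\gcd(2,N)}$. The paper's actual mechanism is Lemma~3.2.1 (the Pauli lemma): every blue cocycle is \emph{quadratic} in $B$, hence realizable (after a suitable group extension) by a Clifford circuit whose volume operators are all Pauli operators; the Pauli lemma then forces $\mu_{56}=1$ by pairwise cancellation. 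The distinction is between ``quadratic vs.\ cubic in $B$,'' not ``$\ZZ_2$-torsion vs.\ $\ZZ_3$-torsion.'' This is why $\mu_{56}$ detects the red classes (built from triple products, $B^3$ or $P^1(B)$) and misses the blue ones. Your direct computation would still produce the correct answer since it sidesteps the need for this conceptual argument, but if you want the clean separation you gesture toward, the Pauli lemma is the tool, not local cancellation or the SNF structure.
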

\noindent This result is analogous to particles in two dimensions, which exhibit $U(1)$ exchange statistics, while in 3 or higher dimensions the statistics reduce to $\ZZ_2$.  
In contrast, $\mu_{56}$ stabilizes into $\ZZ_3$ statistics rather than $\ZZ_2$.
As we will show, this distinction originates from the difference between the Steenrod square (defined mod 2) and the Steenrod reduced power (defined mod 3).\footnote{
Our 56-step process does not detect the $\ZZ_2$ sector in Eq.~\eqref{eq:cohomology of Eilenberg-MacLane (membrane)}. This $\ZZ_2$ sector is associated with the anomaly
$\operatorname{Sq}^4 B_{d-2}= B_{d-2}\cup_{d-6} B_{d-2}$~\cite{kobayashi2024generalized}.
Since this term is quadratic in $B_{d-2}$, it can be realized within the Pauli stabilizer framework~\cite{Sun2025CliffordQCA}. However, Lemma~\ref{lemma: mu56 = 1 for Pauli} shows that $\mu_{56}=1$ whenever all volume operators are Pauli operators, and therefore $\mu_{56}$ cannot probe this $\ZZ_2$ sector. In follow-up work, we construct an alternative statistical process that detects this $\ZZ_2$ sector within the Pauli framework~\cite{Feng2025PauliTQFT}.
}
The relations between the cohomology classes in Eqs.~\eqref{eq:cohomology of Z-Eilenberg-MacLane (membrane)} and \eqref{eq:cohomology of Eilenberg-MacLane (membrane)} across different dimensions and fusion groups are summarized in SM~Sec.~\ref{app:Fusion group change and dimensional reduction}.

\begin{table*}[t]
\centering
\renewcommand{\arraystretch}{1.5}
\begin{tabular}{c|c|c|c} 
\hline
Dimension $d$ & SPT action & $\mu_{56}$ & Statistics group  \\ 
\hline\hline
$4$         & $\frac{1}{N}B_2\cup B_2\cup B_2$           &    $e^{2\pi i/N}$      &            $\ZZ_N\subseteq {\color{red}\ZZ_{N\times \gcd{(3,N)}}}$    \\ 
\hline
$4$      &$\frac{1}{9} \left(B_2 \cup B_2 \cup B_2 + (B_2 \cup_1 \delta B_2) \cup B_2 - B_2 \cup (B_2 \cup_1 \delta B_2) \right)$ &   ${\color{red}e^{2\pi i/9}}$       &       ${\color{red}\ZZ_{9}}$            \\ 
\hline
$5$    & Steenrod reduced power $\frac{1}{3} P^1(B_3)$ in SM~Eq.~\eqref{eq:P1(B3)}       &     ${\color{red}e^{2\pi i/3}}$     &               ${\color{red}\ZZ_3} \times {\color{blue}\ZZ_3}$    \\ 
\hline
$6$    &  Steenrod reduced power $\frac{1}{3} P^1(B_4)$ in SM~Eq.~\eqref{eq:P1(B4)}      &    ${\color{red}e^{2\pi i/3}}$      &              ${\color{red}\ZZ_3} \times {\color{blue}\ZZ_3}$     \\ 
\hline
$7$    &  Steenrod reduced power  $\frac{1}{3} P^1(B_5)$ in SM~Eq.~\eqref{eq:P1(B5)}     &     ${\color{red}e^{2\pi i/3}}$     &               ${\color{red}\ZZ_3}$    \\
\hline
\end{tabular}
\caption{The output of the 56-step process for different SPT actions, together with their statistics $\mu_{56}$. We have verified that the process is able to detect the statistics of the higher Pontryagin power in four spatial dimensions, and the Steenrod reduced power in higher dimensions. The first row shows the result verified for arbitrary $N$ by substituting $\ZZ$-valued $B_2$ into the algorithm, while the remaining rows illustrate the representative case of $\ZZ_3$ membranes.
In the table, only the statistics indicated in red are detected by the 56-step process.}
\label{table: output mu56}
\end{table*}

We first focus on membrane statistics in $d=4$ by realizing symmetry domain-wall membranes on the $(4{+}1)$D boundary of a $(5{+}1)$D 1-form $\ZZ_N$ SPT phase. We will construct the boundary Hamiltonian and operators that transport membrane excitations and evaluate the 56-step process $\mu_{56}$. The resulting Berry phase matches Eq.~\eqref{eq:cohomology of Eilenberg-MacLane (membrane)}, showing that $\mu_{56}$ detects anyonic membrane statistics in four dimensions. 

Consider $\ZZ_N$ membrane excitations in four spatial dimensions. We describe a membrane by a closed $2$-cochain $B_2\in Z^2(M_5,\ZZ_N)$ on the spacetime manifold $M_5$, Poincar\'e dual to the membrane worldvolume.
In this setting, the anomaly is captured by a $(5{+}1)$D 1-form $\ZZ_N$ SPT phase classified by $H^6(B^2\ZZ_N,\RR/\ZZ)$~\cite{kobayashi2024generalized}. A convenient choice of cocycle representatives of $H^6(B^2\ZZ_N, \RR/\ZZ)$ is
\begin{equation}
    {
    \begin{cases}
    {\color{red}\dfrac{1}{N} B_2 \cup B_2 \cup B_2}~, 
    &\!\!\! N \not\equiv 0 \pmod{3}~,\\[3pt]
    {\color{red}\dfrac{1}{3N}
    \!\left[\begin{aligned}
     &~~B_2 \cup B_2 \cup B_2 \\
    &+ (B_2 \cup_1 \delta B_2)\cup B_2 \\
    &-~B_2 \cup (B_2 \cup_1 \delta B_2)
    \end{aligned}~\right]
    }
    , 
    & \!\!\! N \equiv 0 \pmod{3}~.
    \end{cases}
    }
\label{eq:higher Pontryagin power}
\end{equation}
where the second line defines the \emph{higher Pontryagin power}~\cite{hsin2024classifying}. In SM~Sec.~\ref{app: Membrane statistics in d>=4 spatial dimensions}, we construct the corresponding boundary theory and show that the domain-wall membranes exhibit
\begin{equation*}
    \mu_{56} =
    \begin{cases}
        \exp\!\left(\frac{2\pi i}{N}\right)~, & N\not\equiv 0 \pmod{3}~,\\[3pt]
        \exp\!\left(\frac{2\pi i}{3N}\right)~, & N\equiv 0 \pmod{3}~,
    \end{cases}
\end{equation*}
thereby demonstrating that $\mu_{56}$ is a nontrivial statistical invariant specified by the underlying SPT anomaly.

We emphasize that the above cocycle is related to the first Pontryagin class $p_1\in H^4(M,\ZZ)$. In particular, Ref.~\cite{Tomonaga1965PontryaginMod3} shows that the mod-$3$ Steenrod reduced power satisfies
\begin{equation}
    B_2 \cup B_2 \cup B_2 \equiv p_1 \cup B_2 \pmod{3}~.
\end{equation}
Thus, the $\ZZ_3$ sector is controlled by $p_1 \bmod 3$, and we refer to it as \emph{Pontryagin statistics}.

Similar analyses for $d>4$ are presented in SM~Sec.~\ref{app: Membrane statistics in d>=4 spatial dimensions}, where we explicitly construct lattice models for the corresponding SPT phases and show that their boundaries support domain-wall membrane excitations with nontrivial statistics. The numerical values of $\mu_{56}$ for $d=4,5,6,7$ are summarized in Table~\ref{table: output mu56}. The algorithm for evaluating $\mu_{56}$ from a given cocycle is described in SM~Sec.~\ref{sec:Algorithm for verifying membrane statistics}.

\prlsection{Discussions}\label{sec:Discussions}
In this work, we introduced a 56-step unitary process that defines and detects membrane statistics in four and higher spatial dimensions. Starting from four dimensions, the process captures $U(1)$ anyonic membrane statistics, which persist in higher dimensions and stabilize to a $\ZZ_3$ classification governed by the first Pontryagin class mod 3. This provides a unifying framework for generalized membrane statistics across dimensions.
Several important questions remain open. One is to clarify the precise geometric meaning of the 56-step process. Another is to identify additional statistical processes of membrane excitations, particularly for the $\ZZ_2$ family. Resolving these issues would deepen the connection between lattice constructions, anomalies of higher symmetries, and cohomological operations. 

Finally, our findings are consistent with the global framing structure captured by the homotopy group $\pi_{d+1}(S^{d-2})$, which classifies framed cobordism classes of membranes in $d$ spatial dimensions via the Pontryagin–Thom construction~\cite{Gukov:2025dol}.
In this language, a correlation function that detects a change of framing is naturally interpreted as an obstruction to gauging the symmetry generated by membrane-creation operators, i.e., to condensing the membrane excitations. Since our lattice construction realizes only a restricted set of membrane operators (e.g., fixed by a given fusion rule), the resulting anomaly invariants can capture only the corresponding subgroup of the full framing classification.
For $d\ge 5$, $\pi_{d+1}(S^{d-2})$ contains a $\ZZ_{12}$ subgroup (and it is $\mathbb{Z}_{24}$ for $d\geq 7$), consistent with the stabilized $\ZZ_{\gcd(2,N)}\times \ZZ_{\gcd(3,N)}$ structure revealed by our construction~\cite{Gukov:2025dol}. An interesting open direction is to clarify the relation between this $\ZZ_{12}$ structure and the $\ZZ_6$ subgroup that appears in our explicit lattice computations, and to determine whether the full $\ZZ_{12}$ (or $\mathbb{Z}_{24}$ for $d\geq 7$) can be detected by lattice anomaly diagnostics.

\section*{Acknowledgment}

We thank Xie Chen, Lukasz Fidkowski, Jeongwan Haah, Anton Kapustin, Yuji Tachikawa, Qing-Rui Wang, and Xiao-Gang Wen for insightful discussions. We are also grateful to Anibal M. Medina-Mardones for clarifying the cochain-level May–Steenrod operations and for kindly sharing the Python notebook.

Y.-A.C. is supported by the National Natural Science Foundation of China (Grant No.~12474491) and the Fundamental Research Funds for the Central Universities, Peking University.
R.K. is supported by the U.S. Department of Energy through grant number DE-SC0009988 and the Sivian Fund.
M.C. is supported by NSF grant DMR-2424315.
P.-S.H. is supported by the Department of Mathematics, King's College London.
H.X. is supported by the Department of Physics, Massachusetts Institute of Technology.

\bibliography{bibliography}

\onecolumngrid
\newpage
\begin{center}
\Large{\bf Supplemental Materials}
\end{center}


\section{Intermediate states in the 56-step process}\label{app:Intermediate states in the 56-step process}

When we apply the 56-step process to the initial vacuum state, all intermediate states are:
{\footnotesize
\begin{equation}
\begin{split}
U_{0145} \;&\to\; -\langle 014\rangle + \langle 015\rangle - \langle 045\rangle + \langle 145\rangle \\
U_{0235} \;&\to\; -\langle 014\rangle + \langle 015\rangle - \langle 023\rangle + \langle 025\rangle - \langle 035\rangle - \langle 045\rangle + \langle 145\rangle + \langle 235\rangle \\
U_{0245}^\dagger \;&\to\; -\langle 014\rangle + \langle 015\rangle - \langle 023\rangle + \langle 024\rangle - \langle 035\rangle + \langle 145\rangle + \langle 235\rangle - \langle 245\rangle \\
U_{0345} \;&\to\; -\langle 014\rangle + \langle 015\rangle - \langle 023\rangle + \langle 024\rangle - \langle 034\rangle - \langle 045\rangle + \langle 145\rangle + \langle 235\rangle - \langle 245\rangle + \langle 345\rangle \\
U_{0123}^\dagger \;&\to\; \langle 012\rangle - \langle 013\rangle - \langle 014\rangle + \langle 015\rangle + \langle 024\rangle - \langle 034\rangle - \langle 045\rangle - \langle 123\rangle + \langle 145\rangle + \langle 235\rangle - \langle 245\rangle + \langle 345\rangle \\
U_{0345}^\dagger \;&\to\; \langle 012\rangle - \langle 013\rangle - \langle 014\rangle + \langle 015\rangle + \langle 024\rangle - \langle 035\rangle - \langle 123\rangle + \langle 145\rangle + \langle 235\rangle - \langle 245\rangle \\
U_{0245} \;&\to\; \langle 012\rangle - \langle 013\rangle - \langle 014\rangle + \langle 015\rangle + \langle 025\rangle - \langle 035\rangle - \langle 045\rangle - \langle 123\rangle + \langle 145\rangle + \langle 235\rangle \\
U_{0234} \;&\to\; \langle 012\rangle - \langle 013\rangle - \langle 014\rangle + \langle 015\rangle - \langle 023\rangle + \langle 024\rangle + \langle 025\rangle - \langle 034\rangle - \langle 035\rangle - \langle 045\rangle - \langle 123\rangle + \langle 145\rangle + \langle 234\rangle + \langle 235\rangle \\
U_{0124} \;&\to\; -\langle 013\rangle + \langle 015\rangle - \langle 023\rangle + \langle 025\rangle - \langle 034\rangle - \langle 035\rangle - \langle 045\rangle - \langle 123\rangle + \langle 124\rangle + \langle 145\rangle + \langle 234\rangle + \langle 235\rangle \\
U_{0134}^\dagger \;&\to\; -\langle 014\rangle + \langle 015\rangle - \langle 023\rangle + \langle 025\rangle - \langle 035\rangle - \langle 045\rangle - \langle 123\rangle + \langle 124\rangle - \langle 134\rangle + \langle 145\rangle + \langle 234\rangle + \langle 235\rangle \\
U_{0235}^\dagger \;&\to\; -\langle 014\rangle + \langle 015\rangle - \langle 045\rangle - \langle 123\rangle + \langle 124\rangle - \langle 134\rangle + \langle 145\rangle + \langle 234\rangle \\
U_{0134} \;&\to\; -\langle 013\rangle + \langle 015\rangle - \langle 034\rangle - \langle 045\rangle - \langle 123\rangle + \langle 124\rangle + \langle 145\rangle + \langle 234\rangle \\
U_{0124}^\dagger \;&\to\; \langle 012\rangle - \langle 013\rangle - \langle 014\rangle + \langle 015\rangle + \langle 024\rangle - \langle 034\rangle - \langle 045\rangle - \langle 123\rangle + \langle 145\rangle + \langle 234\rangle \\
U_{0123} \;&\to\; -\langle 014\rangle + \langle 015\rangle - \langle 023\rangle + \langle 024\rangle - \langle 034\rangle - \langle 045\rangle + \langle 145\rangle + \langle 234\rangle \\
U_{0234}^\dagger \;&\to\; -\langle 014\rangle + \langle 015\rangle - \langle 045\rangle + \langle 145\rangle \\
U_{0145}^\dagger \;&\to\; 0 \\
U_{0234} \;&\to\; -\langle 023\rangle + \langle 024\rangle - \langle 034\rangle + \langle 234\rangle \\
U_{0123}^\dagger \;&\to\; \langle 012\rangle - \langle 013\rangle + \langle 024\rangle - \langle 034\rangle - \langle 123\rangle + \langle 234\rangle \\
U_{0134}^\dagger \;&\to\; \langle 012\rangle - \langle 014\rangle + \langle 024\rangle - \langle 123\rangle - \langle 134\rangle + \langle 234\rangle \\
U_{0135} \;&\to\; \langle 012\rangle - \langle 013\rangle - \langle 014\rangle + \langle 015\rangle + \langle 024\rangle - \langle 035\rangle - \langle 123\rangle - \langle 134\rangle + \langle 135\rangle + \langle 234\rangle \\
U_{0345} \;&\to\; \langle 012\rangle - \langle 013\rangle - \langle 014\rangle + \langle 015\rangle + \langle 024\rangle - \langle 034\rangle - \langle 045\rangle - \langle 123\rangle - \langle 134\rangle + \langle 135\rangle + \langle 234\rangle + \langle 345\rangle \\
U_{0124} \;&\to\; -\langle 013\rangle + \langle 015\rangle - \langle 034\rangle - \langle 045\rangle - \langle 123\rangle + \langle 124\rangle - \langle 134\rangle + \langle 135\rangle + \langle 234\rangle + \langle 345\rangle \\
U_{0345}^\dagger \;&\to\; -\langle 013\rangle + \langle 015\rangle - \langle 035\rangle - \langle 123\rangle + \langle 124\rangle - \langle 134\rangle + \langle 135\rangle + \langle 234\rangle \\
U_{0135}^\dagger \;&\to\; -\langle 123\rangle + \langle 124\rangle - \langle 134\rangle + \langle 234\rangle \\
U_{0235} \;&\to\; -\langle 023\rangle + \langle 025\rangle - \langle 035\rangle - \langle 123\rangle + \langle 124\rangle - \langle 134\rangle + \langle 234\rangle + \langle 235\rangle \\
U_{0345} \;&\to\; -\langle 023\rangle + \langle 025\rangle - \langle 034\rangle - \langle 045\rangle - \langle 123\rangle + \langle 124\rangle - \langle 134\rangle + \langle 234\rangle + \langle 235\rangle + \langle 345\rangle \\
U_{0245}^\dagger \;&\to\; -\langle 023\rangle + \langle 024\rangle - \langle 034\rangle - \langle 123\rangle + \langle 124\rangle - \langle 134\rangle + \langle 234\rangle + \langle 235\rangle - \langle 245\rangle + \langle 345\rangle \\
U_{0234}^\dagger \;&\to\; -\langle 123\rangle + \langle 124\rangle - \langle 134\rangle + \langle 235\rangle - \langle 245\rangle + \langle 345\rangle \\
U_{0135} \;&\to\; -\langle 013\rangle + \langle 015\rangle - \langle 035\rangle - \langle 123\rangle + \langle 124\rangle - \langle 134\rangle + \langle 135\rangle + \langle 235\rangle - \langle 245\rangle + \langle 345\rangle \\
U_{0234} \;&\to\; -\langle 013\rangle + \langle 015\rangle - \langle 023\rangle + \langle 024\rangle - \langle 034\rangle - \langle 035\rangle - \langle 123\rangle + \langle 124\rangle - \langle 134\rangle + \langle 135\rangle + \langle 234\rangle + \langle 235\rangle - \langle 245\rangle + \langle 345\rangle \\
U_{0245} \;&\to\; -\langle 013\rangle + \langle 015\rangle - \langle 023\rangle + \langle 025\rangle - \langle 034\rangle - \langle 035\rangle - \langle 045\rangle - \langle 123\rangle + \langle 124\rangle - \langle 134\rangle + \langle 135\rangle + \langle 234\rangle + \langle 235\rangle + \langle 345\rangle \\
U_{0124}^\dagger \;&\to\; \langle 012\rangle - \langle 013\rangle - \langle 014\rangle + \langle 015\rangle - \langle 023\rangle + \langle 024\rangle + \langle 025\rangle - \langle 034\rangle - \langle 035\rangle - \langle 045\rangle - \langle 123\rangle - \langle 134\rangle + \langle 135\rangle + \langle 234\rangle + \langle 235\rangle + \langle 345\rangle \\
U_{0234}^\dagger \;&\to\; \langle 012\rangle - \langle 013\rangle - \langle 014\rangle + \langle 015\rangle + \langle 025\rangle - \langle 035\rangle - \langle 045\rangle - \langle 123\rangle - \langle 134\rangle + \langle 135\rangle + \langle 235\rangle + \langle 345\rangle \\
U_{0245}^\dagger \;&\to\; \langle 012\rangle - \langle 013\rangle - \langle 014\rangle + \langle 015\rangle + \langle 024\rangle - \langle 035\rangle - \langle 123\rangle - \langle 134\rangle + \langle 135\rangle + \langle 235\rangle - \langle 245\rangle + \langle 345\rangle \\
U_{0135}^\dagger \;&\to\; \langle 012\rangle - \langle 014\rangle + \langle 024\rangle - \langle 123\rangle - \langle 134\rangle + \langle 235\rangle - \langle 245\rangle + \langle 345\rangle \\
U_{0134} \;&\to\; \langle 012\rangle - \langle 013\rangle + \langle 024\rangle - \langle 034\rangle - \langle 123\rangle + \langle 235\rangle - \langle 245\rangle + \langle 345\rangle \\
U_{0123} \;&\to\; -\langle 023\rangle + \langle 024\rangle - \langle 034\rangle + \langle 235\rangle - \langle 245\rangle + \langle 345\rangle \\
U_{0245} \;&\to\; -\langle 023\rangle + \langle 025\rangle - \langle 034\rangle - \langle 045\rangle + \langle 235\rangle + \langle 345\rangle \\
U_{0125} \;&\to\; -\langle 012\rangle + \langle 015\rangle - \langle 023\rangle - \langle 034\rangle - \langle 045\rangle + \langle 125\rangle + \langle 235\rangle + \langle 345\rangle \\
U_{0123}^\dagger \;&\to\; -\langle 013\rangle + \langle 015\rangle - \langle 034\rangle - \langle 045\rangle - \langle 123\rangle + \langle 125\rangle + \langle 235\rangle + \langle 345\rangle \\
\nonumber
\end{split}
\end{equation}
\begin{equation}
\begin{split}
U_{0134}^\dagger \;&\to\; -\langle 014\rangle + \langle 015\rangle - \langle 045\rangle - \langle 123\rangle + \langle 125\rangle - \langle 134\rangle + \langle 235\rangle + \langle 345\rangle \\
U_{0234} \;&\to\; -\langle 014\rangle + \langle 015\rangle - \langle 023\rangle + \langle 024\rangle - \langle 034\rangle - \langle 045\rangle - \langle 123\rangle + \langle 125\rangle - \langle 134\rangle + \langle 234\rangle + \langle 235\rangle + \langle 345\rangle \\
U_{0345}^\dagger \;&\to\; -\langle 014\rangle + \langle 015\rangle - \langle 023\rangle + \langle 024\rangle - \langle 035\rangle - \langle 123\rangle + \langle 125\rangle - \langle 134\rangle + \langle 234\rangle + \langle 235\rangle \\
U_{0134} \;&\to\; -\langle 013\rangle + \langle 015\rangle - \langle 023\rangle + \langle 024\rangle - \langle 034\rangle - \langle 035\rangle - \langle 123\rangle + \langle 125\rangle + \langle 234\rangle + \langle 235\rangle \\
U_{0234}^\dagger \;&\to\; -\langle 013\rangle + \langle 015\rangle - \langle 035\rangle - \langle 123\rangle + \langle 125\rangle + \langle 235\rangle \\
U_{0345} \;&\to\; -\langle 013\rangle + \langle 015\rangle - \langle 034\rangle - \langle 045\rangle - \langle 123\rangle + \langle 125\rangle + \langle 235\rangle + \langle 345\rangle \\
U_{0123} \;&\to\; -\langle 012\rangle + \langle 015\rangle - \langle 023\rangle - \langle 034\rangle - \langle 045\rangle + \langle 125\rangle + \langle 235\rangle + \langle 345\rangle \\
U_{0125}^\dagger \;&\to\; -\langle 023\rangle + \langle 025\rangle - \langle 034\rangle - \langle 045\rangle + \langle 235\rangle + \langle 345\rangle \\
U_{0135} \;&\to\; -\langle 013\rangle + \langle 015\rangle - \langle 023\rangle + \langle 025\rangle - \langle 034\rangle - \langle 035\rangle - \langle 045\rangle + \langle 135\rangle + \langle 235\rangle + \langle 345\rangle \\
U_{0245}^\dagger \;&\to\; -\langle 013\rangle + \langle 015\rangle - \langle 023\rangle + \langle 024\rangle - \langle 034\rangle - \langle 035\rangle + \langle 135\rangle + \langle 235\rangle - \langle 245\rangle + \langle 345\rangle \\
U_{0134}^\dagger \;&\to\; -\langle 014\rangle + \langle 015\rangle - \langle 023\rangle + \langle 024\rangle - \langle 035\rangle - \langle 134\rangle + \langle 135\rangle + \langle 235\rangle - \langle 245\rangle + \langle 345\rangle \\
U_{0245} \;&\to\; -\langle 014\rangle + \langle 015\rangle - \langle 023\rangle + \langle 025\rangle - \langle 035\rangle - \langle 045\rangle - \langle 134\rangle + \langle 135\rangle + \langle 235\rangle + \langle 345\rangle \\
U_{0134} \;&\to\; -\langle 013\rangle + \langle 015\rangle - \langle 023\rangle + \langle 025\rangle - \langle 034\rangle - \langle 035\rangle - \langle 045\rangle + \langle 135\rangle + \langle 235\rangle + \langle 345\rangle \\
U_{0135}^\dagger \;&\to\; -\langle 023\rangle + \langle 025\rangle - \langle 034\rangle - \langle 045\rangle + \langle 235\rangle + \langle 345\rangle \\
U_{0345}^\dagger \;&\to\; -\langle 023\rangle + \langle 025\rangle - \langle 035\rangle + \langle 235\rangle \\
U_{0235}^\dagger \;&\to\; 0
\nonumber
\end{split}
\end{equation}
}

\section{Proof of the 56-step unitary as a statistical process}\label{app:Proof of the 56-step unitary as a statistical process}

In this section, we prove that the 56-step unitary defined in Eq.~\eqref{eq:56-step_unitary_process} is a \textbf{statistical process}, independent of the particular choice of the volume operator $U_{ijkl}$, provided it satisfies Eq.~\eqref{eq:Us_on_a}. As discussed in the main text, it suffices to verify the \textbf{local cancellation criterion}~\cite{FHH21, xue2025statistics}. Roughly speaking, whenever a local configuration change occurs during the process, a corresponding reverse operation must appear later so that any phase accumulated from local interactions cancels exactly.

To formulate the local cancellation criterion more precisely, we follow Refs.~\cite{kobayashi2024generalized, xue2025statistics}. We first define the phase factor $\theta(s,a)$ by the action of a unitary operator $U(s)$ on a state $\lvert a\rangle$:  
\begin{equation}
U(s)\,\lvert a\rangle \;=\; \exp \!\big( i\theta(U(s),a) \big)\, \lvert a + \partial s\rangle ,
\label{eq:theta_def}
\end{equation}
where $\theta(U(s),a)$ depends on the operator $U(s)$ and the state $\lvert a\rangle$. The set $\{\theta(U,a)\}$ for all $U$ and $a$ encodes the statistical information. These phases are not independent; for instance,  
\begin{equation}
    \lvert a\rangle =U(s)U(s)^\dagger\,\lvert a\rangle \;=\; \exp \!\big(i\theta(U(s)^\dagger,a)\big)
    \exp \!\big(i\theta(U(s),a - \partial s)\big)\lvert a \rangle ,
\end{equation}
which implies
\begin{equation}
    \theta(U(s)^\dagger,a) = -\theta(U(s),a - \partial s) \pmod{2\pi}~.
\label{eq:theta_dagger_def}
\end{equation}
As an illustration, consider the $T$-junction process in Eq.~\eqref{eq:T-junction in the introduction}.
With two particles at vertices 1 and 2 as the initial state (represented by $a=\langle 1\rangle+\langle 2\rangle$, see Eq.~\eqref{a definition}), the total accumulated phase is
\begin{eqs}
    &\theta( U_{02} U_{03}^\dagger U_{01} U_{02}^\dagger U_{03} U_{01}^\dagger,\langle 1\rangle + \langle 2\rangle)\\
    =&~\theta( U_{01}^\dagger,\langle 1\rangle + \langle 2\rangle) 
    + \theta( U_{03},\langle 0\rangle + \langle 2\rangle) 
    + \theta( U_{02}^\dagger,\langle 2\rangle + \langle 3\rangle)\\
    &+ \theta( U_{01},\langle 0\rangle + \langle 3\rangle)
    + \theta( U_{03}^\dagger,\langle 1\rangle + \langle 3\rangle)
    + \theta( U_{02},\langle 0\rangle + \langle 1\rangle)\\
    =&- \theta( U_{01},\langle 0\rangle + \langle 2\rangle) 
    + \theta( U_{03},\langle 0\rangle + \langle 2\rangle) 
    - \theta( U_{02},\langle 0\rangle + \langle 3\rangle)\\
    &+ \theta( U_{01},\langle 0\rangle + \langle 3\rangle)
    - \theta( U_{03},\langle 0\rangle + \langle 1\rangle)
    + \theta( U_{02},\langle 0\rangle + \langle 1\rangle)~,
\end{eqs}
where
Eq.~\eqref{eq:theta_dagger_def} has been used to express all terms in terms of $U$ without $U^\dagger$.

We now define the projection
\begin{equation}
    a \;\rightarrow\; a|_v,
\end{equation}
which keeps only those simplices in $a$ that contain $v$ as a vertex.\footnote{\parbox{0.98\textwidth}{More formally, this projection identifies $a$ and $a'$ if $a-a'$ is generated by excitation operators away from $v$; see Theorem~III.3 of Ref.~\cite{xue2025statistics}.}}
For the phase factor $\theta(U,a)$, we introduce the corresponding projection
\begin{equation}
    P_v \,\theta(U(s),a) :=
    \begin{cases}
        \theta(U(s), a|_v), & \text{if $v \in s$,}\\
        0, & \text{otherwise,}
    \end{cases}
\end{equation}
where $v \in s$ indicates that $v$ is a vertex of the simplex $s$.  
With these definitions, the local cancellation condition for a process $V=\prod U^\pm$ can be stated precisely as
\begin{equation}
    P_v\,\theta(V,a) = 0, \quad \forall\, v~.
\end{equation}
It can be verified that the $T$-junction process satisfies the local cancellation criterion.  
For example, choosing $v=\langle 0 \rangle$ gives
\begin{eqs}
    &P_0 \,\theta( U_{02} U_{03}^\dagger U_{01} U_{02}^\dagger U_{03} U_{01}^\dagger,\langle 1\rangle + \langle 2\rangle)\\
    =&-P_0\,\theta ( U_{01},\langle 0\rangle + \langle 2\rangle ) 
    + P_0\,\theta ( U_{03},\langle 0\rangle + \langle 2\rangle) 
    - P_0\,\theta ( U_{02},\langle 0\rangle + \langle 3\rangle )\\
    &+ P_0\,\theta ( U_{01},\langle 0\rangle + \langle 3\rangle )
    - P_0\,\theta ( U_{03},\langle 0\rangle + \langle 1\rangle)
    + P_0\,\theta ( U_{02},\langle 0\rangle + \langle 1\rangle )\\
    =&- \theta ( U_{01},\langle 0\rangle ) 
    + \theta ( U_{03},\langle 0\rangle ) 
    - \theta ( U_{02},\langle 0\rangle )
    + \theta ( U_{01},\langle 0\rangle )
    - \theta ( U_{03},\langle 0\rangle )
    + \theta ( U_{02},\langle 0\rangle )
    = 0~,
\end{eqs}
while choosing $v=\langle 1 \rangle$ yields
\begin{eqs}
    &P_1 \,\theta( U_{02} U_{03}^\dagger U_{01} U_{02}^\dagger U_{03} U_{01}^\dagger,\langle 1\rangle + \langle 2\rangle)\\
    =&-P_1\,\theta ( U_{01},\langle 0\rangle + \langle 2\rangle ) 
    + P_1\,\theta ( U_{03},\langle 0\rangle + \langle 2\rangle) 
    - P_1\,\theta ( U_{02},\langle 0\rangle + \langle 3\rangle )\\
    &+ P_1\,\theta ( U_{01},\langle 0\rangle + \langle 3\rangle )
    - P_1\,\theta ( U_{03},\langle 0\rangle + \langle 1\rangle)
    + P_1\,\theta ( U_{02},\langle 0\rangle + \langle 1\rangle )\\
    =&- \theta( U_{01},\varnothing) + \theta( U_{01},\varnothing) = 0~,
\end{eqs}
where $\varnothing$ denotes the vacuum configuration. The same computation applies to all other vertices, confirming that the $T$-junction process indeed satisfies the local cancellation criterion and therefore forms a statistical process.

We now turn to the 56-step process $\mu_{56}$. As in the $T$-junction computation above, it is necessary to verify that, after projection $P_v$, the total phase vanishes for every vertex $v$. This condition can be refined further: for each choice of $v$, all contributions associated with a given operator $U(s)$ must cancel. Verifying this for all possible pairs $(v,s)$ ensures that $\mu_{56}$ satisfies the local cancellation criterion.

As an illustration, consider $v=\langle 5\rangle$. After applying the projection $P_5$, we are going to show all terms involving $U_{0345}$ cancel. From the intermediate steps listed in SM~Sec.~\ref{app:Intermediate states in the 56-step process}, the relevant contributions are  
\begin{enumerate}
    \item $\theta\!\left(U_{0345}, \langle 015\rangle - \langle 035\rangle + \langle 135\rangle\right)$,  
    \item $\theta\!\left(U_{0345}, \langle 025\rangle - \langle 035\rangle + \langle 235\rangle\right)$,  
    \item $\theta\!\left(U_{0345}, \langle 015\rangle - \langle 035\rangle + \langle 125\rangle + \langle 235\rangle\right)$,  
    \item $\theta\!\left(U_{0345}, \langle 015\rangle - \langle 035\rangle + \langle 145\rangle + \langle 235\rangle - \langle 245\rangle \right)$,  
    \item $\theta\!\left(U_{0345}^\dagger, \langle 015\rangle - \langle 045\rangle + \langle 135\rangle + \langle 345\rangle\right)$,  
    \item $\theta\!\left(U_{0345}^\dagger, \langle 025\rangle - \langle 045\rangle + \langle 235\rangle + \langle 345\rangle\right)$,  
    \item $\theta\!\left(U_{0345}^\dagger, \langle 015\rangle - \langle 045\rangle + \langle 125\rangle + \langle 235\rangle + \langle 345\rangle\right)$,  
    \item $\theta\!\left(U_{0345}^\dagger, \langle 015\rangle - \langle 045\rangle + \langle 145\rangle + \langle 235\rangle - \langle 245\rangle + \langle 345\rangle\right)$.  
\end{enumerate}
From Eq.~\eqref{eq:theta_dagger_def}, we have  
\begin{equation}
    P_5~\theta\!\left(U_{0345}^\dagger, \langle 015\rangle - \langle 045\rangle + \langle 135\rangle + \langle 345\rangle\right) 
    = - P_5~\theta\!\left(U_{0345}, \langle 015\rangle - \langle 035\rangle + \langle 135\rangle\right)~.
\end{equation}
Hence, phases~1 and~5 cancel, as do phases~2 and~6, phases~3 and~7, and phases~4 and~8. Thus, under $P_5$, all contributions cancel exactly. Repeating this argument for every vertex shows that, in the 56-step process, all local phase factors vanish. Therefore, the 56-step unitary satisfies the local cancellation criterion and defines a statistical process for membranes.

\section{Membrane statistics in $d\ge 4$ spatial dimensions}\label{app: Membrane statistics in d>=4 spatial dimensions}

In this Appendix, we study membrane statistics in spatial dimensions $d\ge 4$. We construct higher-form SPT phases in $(d{+}1)$ spatial dimensions and show that symmetry domain walls on their boundaries support membrane excitations with nontrivial statistics, $\mu_{56}\neq 1$.

\subsection{4 spatial dimensions}

First, consider the $(5{+}1)$D 1-form $\mathbb{Z}_N$ SPT phase described by the cocycle
\begin{equation}
    T[B_2] := \frac{1}{N} B_2 \cup B_2 \cup B_2~.
\end{equation}
On a 5-manifold $M_5$, the corresponding SPT ground state can be written as
\begin{equation}
\begin{split}
\ket{\Psi_{\mathrm{SPT}}} &= \sum_{b_1}\exp\!\left(2\pi i\int_{M_5}\Phi[b_1]\right)\ket{b_1}~,\\
\Phi[b_1] &:= \frac{1}{N}\,b_1\cup \delta b_1\cup \delta b_1~.
\end{split}
\end{equation}
where $b_1 \in C^1(M_5, \ZZ_N)$ is a $1$-cochain representing the physical degrees of freedom.  
On a closed manifold $M_5$, this ground state is invariant under the $1$-form symmetry
\begin{equation}
    b_1 \;\to\; b_1 + \delta \epsilon_0, \quad \forall\, \epsilon_0 \in C^0(M_5, \ZZ_N)~.
\end{equation}
If $M_5$ has a boundary, the symmetry action acquires an additional boundary contribution:
\begin{eqs}
    & \frac{1}{N} \int_{M_5} (b_1+\delta \epsilon_0) \cup \delta (b_1+\delta \epsilon_0) \cup \delta (b_1+\delta \epsilon_0)  - b_1 \cup \delta b_1 \cup \delta b_1  \\
    =& 
    \frac{1}{N} \int_{\partial M_5} \epsilon_0 \cup \delta b_1 \cup \delta b_1 
    =: \int_{\partial M_5} F[b_1, \epsilon_0]~.
\end{eqs}
Thus, on the $(4{+}1)$D boundary the anomalous $1$-form symmetry is
\begin{equation}
    S_{\epsilon_0} := X_{\delta \epsilon_0} \exp \!\left( 2 \pi i \int_{\partial M_5} F[b_1, \epsilon_0] \right)~,
\end{equation}
where $X_{\lambda_1}$ is the Pauli $X$ operator, $\ket{b_1} \to \ket{b_1 + \lambda_1}$.
To enumerate the symmetric operators, we first define the flux (domain-wall) operators
\begin{equation}
    W_f := Z_{\partial f},
\end{equation}
i.e., the product of $Z_e^{\pm1}$ along $\partial f$, with the sign set by orientation.
We then define “hopping terms,” which commute with $S_{\epsilon_0}$ but violate the flux $W_f$:
\begin{eqs}
    & U_{h_1} = X_{h_1} \exp \!\left( 2\pi i \int_{\partial M_5} L[b_1, h_1]\right)~, \\ 
    & L[b_1, h_1] := \frac{1}{N}\, b_1 \cup (h_1 \cup \delta b_1 + \delta b_1 \cup h_1 + h_1 \cup \delta h_1)~,
\label{eq:hopping operator of membrane in 4d}
\end{eqs}
with $h_1$ any $1$-cochain.
Choose $h_1$ to be $1$ on a single edge $e$ and $0$ elsewhere. The corresponding operator $U_e$ violates the flux terms on all faces $f$ with $e \subset \partial f$. Since the boundary manifold is 4-dimensional, these violations form domain-wall membrane excitations. We verify numerically that the hopping operator defined in Eq.~\eqref{eq:hopping operator of membrane in 4d} generates the statistics $\mu_{56} = \exp(2\pi i/N)$.

Similarly, for the cocycle corresponding to the higher Pontryagin power in Eq.~\eqref{eq:higher Pontryagin power},
\begin{equation}
    T[B_2]:=\frac{1}{3N}
    \big[
     B_2 \cup B_2 \cup B_2 
    + (B_2 \cup_1 \delta B_2)\cup B_2 
    - B_2 \cup (B_2 \cup_1 \delta B_2)
    \big],
\end{equation}
the resulting membrane statistics are given by $\mu_{56} = \exp(2\pi i/3N)$ when $N \equiv 0 \pmod{3}$.

\subsection{5 spatial dimensions}

We now turn to anyonic membrane statistics in $d=5$, where membrane excitations are Poincar\'e dual to a $3$-cochain $B_3$. Accordingly, we consider $(6{+}1)$D SPT phases protected by a $2$-form $\ZZ_N$ symmetry. The corresponding cohomology group has two independent generators. The first contribution is the {\color{blue}$\ZZ_N$} class generated by

\begin{equation}
    {\color{blue} \frac{1}{N^2} B_3 \cup \delta B_3 } \in H^7(B^3\ZZ_N, \RR/\ZZ)~,
\end{equation}
which is the $(6{+}1)$D analogue of a Chern–Simons term and has been extensively studied in M theory in the context of 4-form fluxes~\cite{Witten:1996hc, Witten:1998wy}.
It is straightforward to see that this anomaly cannot be detected by the 56-step process. The cocycle consists of a product of two $B_3$’s, corresponding to a Clifford circuit (after extending the group from $\ZZ_N$ to $\ZZ_{N^2}$). The resulting theory can be formulated in terms of Pauli stabilizer models~\cite{Ellison2022Pauli}, where all excitation operators, including the volume operator $U_{ijkl}$, are Pauli matrices of $\ZZ_{N^2}$. By the following lemma, $\mu_{56}$ must therefore be trivial:  
\begin{lemma}
If all volume operators $U_t$ are Pauli operators, i.e., 
\[
    [U_{t_1}, [U_{t_2}, U_{t_3}]] = 1~, \quad \text{for all } t_1,t_2,t_3
\]
with $[A,B]:=A^{-1}B^{-1}AB$, then $\mu_{56}$ defined in Eq.~\eqref{eq:56-step_unitary_process} equals $+1$.
\label{lemma: mu56 = 1 for Pauli}
\end{lemma}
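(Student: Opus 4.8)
The plan is to pass to the universal instance of the hypothesis. The condition $[U_{t_1},[U_{t_2},U_{t_3}]]=1$ for all triples says exactly that the group $\mathcal H$ generated by the volume operators has nilpotency class at most $2$, i.e.\ $[\mathcal H,\mathcal H]$ is central. Only the ten tetrahedra $t=\langle 0abc\rangle$ through the vertex $0$ occur in Eq.~\eqref{eq:56-step_unitary_process}; let $F$ be the free group on ten generators $x_t$ and $N:=F/\gamma_3(F)$ its free $2$-step nilpotent quotient, so that $\gamma_2(N)$ is central with $\gamma_2(N)\cong\wedge^2\ZZ^{10}$. Since $\mathcal H$ has class $\le 2$, the assignment $x_t\mapsto U_t$ factors through $N$ (and $N$ itself satisfies the hypothesis, so no generality is lost); hence it suffices to prove the universal statement
\[
\mu_{56}=1\ \text{ in }N, \qquad\text{equivalently}\qquad \mu_{56}\in\gamma_3(F).
\]

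\textbf{The two checks.} An element of $N$ is trivial iff its image in the abelianization $\ZZ^{10}$ vanishes and the resulting class in $\gamma_2(N)\cong\wedge^2\ZZ^{10}$ vanishes. I would establish the first from the fact that $\mu_{56}$ acts as a global phase on every state (it is a statistical process) and in particular fixes the vacuum: writing $\mu_{56}=\prod_kU_{t_k}^{\epsilon_k}$ gives $\sum_k\epsilon_k\,\partial t_k=0$, and since among the ten tetrahedra only $\langle 0abc\rangle$ contributes the face $\langle abc\rangle$ to its boundary, every generator $x_t$ occurs with total exponent zero (this is also directly visible in Eq.~\eqref{eq:56-step_unitary_process}). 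For the second, once the exponents vanish I would normal-order the $56$-letter word using $AB=BA\,[A,B]$ repeatedly (legitimate since $[A,B]$ is central), obtaining in $N$
\[
\mu_{56}=\prod_{\{t,t'\}}[x_t,x_{t'}]^{\,m_{t,t'}},
\]
where $m_{t,t'}$ is the signed crossing number of the letters labelled $t$ and $t'$: each pair of positions $k<l$ with $\{t_k,t_l\}=\{t,t'\}$ contributes $\pm\epsilon_k\epsilon_l$ according to which label comes first. The claim then reduces to $m_{t,t'}=0$ for all $\binom{10}{2}=45$ pairs, a finite computation directly on Eq.~\eqref{eq:56-step_unitary_process}, done pair by pair or by the same normal-ordering routine used in Appendix~\ref{app:Computational derivation}.

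\textbf{Main obstacle.} The conceptual steps above are routine; the real work will be the second check — verifying that all $45$ signed crossing numbers of the explicit $56$-letter word cancel. It is worth noting why this should happen: a word in Pauli operators can only accumulate a product of \emph{pairwise} commutators, a quantity that is effectively quadratic in the excitation data, whereas $\mu_{56}$ is constructed to detect a genuinely \emph{cubic} invariant — the cup-cube term $B_2\cup B_2\cup B_2$ and the higher Pontryagin power of Eq.~\eqref{eq:higher Pontryagin power}. The vanishing of the $m_{t,t'}$ is the concrete incarnation of this mismatch, and is what ultimately forces $\mu_{56}=+1$.
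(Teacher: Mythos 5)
Your proposal is correct and takes essentially the same route as the paper's: the paper's observations that Pauli operators commute up to central scalar phases and that the word ``cancels pairwise with vanishing accumulated phases'' correspond exactly to your reduction to the free $2$-step nilpotent quotient, the vanishing of the abelianized exponents, and the vanishing of the signed crossing numbers in $\wedge^2\ZZ^{10}$. Like the paper, you defer the final combinatorial check on the explicit $56$-letter word, so both proofs sit at the same level of detail.
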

\begin{proof}
    If each $U_t$ is a Pauli operator, exchanging $U_{t_1}$ and $U_{t_2}$ produces only a phase depending on $t_1$ and $t_2$.  
    It is then straightforward to verify that all $U_t$ and $U_t^\dagger$ in Eq.~\eqref{eq:56-step_unitary_process} cancel pairwise, and the accumulated phases also vanish.  
\end{proof}
Therefore, we focus on the other contribution, the {\color{red}$\ZZ_{\gcd(3, N)}$} class generated by   
\begin{equation}
    {\color{red} \frac{1}{3} P^1(B_3) := \frac{1}{3}D^3_2(B_3)} 
    \;\in\; H^7(B^3\ZZ_N, \RR/\ZZ)~, \quad \text{for } N\equiv 0 \pmod{3}~,
\label{eq:P1(B3)}
\end{equation}
where $P^i$ denotes the Steenrod reduced $p$-th power operation (with $p=3$ in our case):  
\begin{equation}
    P^i : H^n(X;\, \ZZ_p) \;\to\; H^{\,n+2i(p-1)}(X;\, \ZZ_p)~,
\end{equation}
and $D^p_i$ denotes the Steenrod cup-$(p,i)$ product introduced in Ref.~\cite{Medina2021CochainMaySteenrod}, which realizes the cochain-level May–Steenrod operations~\cite{Steenrod1947, May1970algebraic}. The explicit cochain expression of the cup-$(p,i)$ product is reviewed in SM~Sec.~\ref{app:Higher cup products, May–Steenrod structure, and Steenrod reduce powers}.  
Moreover, in any $(n+4)$-dimensional compact orientable manifold $M$, $P^1(B_n)$ is shown in Ref.~\cite{Tomonaga1965PontryaginMod3} to be related to the first Pontryagin class $p_1\in H^4(M,\ZZ)$:  
\begin{equation}
    P^1(B_n) = p_1 \cup B_n \pmod{3}~.
\end{equation}
We therefore refer to this ${\ZZ_{\gcd(3, N)}}$ sector as \textbf{Pontryagin statistics}. By contrast, the familiar fermionic $\ZZ_2$ statistics arises from Wu classes, which can be rephrased in terms of Stiefel–Whitney classes.

Given this cochain-level representative of the cohomology class, one can follow the standard procedure~\cite{chen2013groupcohomology, Kapustin2017Higher} to construct the corresponding wavefunction and parent Hamiltonian of the SPT phase.
In SM~Sec.~\ref{app:Higher cup products, May–Steenrod structure, and Steenrod reduce powers}, we construct the $(6{+}1)$D $2$-form $\ZZ_N$ SPT phase and show that domain-wall membrane excitations on its $(5{+}1)$D boundary exhibit nontrivial statistics $\mu_{56} = \exp(\frac{2\pi i}{3})$ for $N\equiv 0 \pmod{3}$.

\subsection{6 spatial dimensions}

Now we turn to 6 spatial dimensions. In this case, $H^8(B^4\ZZ_N, \RR/\ZZ)$ has two generators. The first is the {\color{blue}$\ZZ_{N \times \gcd(2,N)}$} class, represented by  
\begin{equation}
    {\color{blue}
    \begin{cases}
        \frac{1}{N}\, B_4 \cup B_4~, & N \not\equiv 0 \pmod{2}~,\\[3pt]
        \frac{1}{2N}\!\left(B_4 \cup B_4 + B_4 \cup_1 \delta B_4 \right)~, & N \equiv 0 \pmod{2}~.
    \end{cases}}
\end{equation}
Since this cocycle involves only products of two $B_4$, it can be realized by a Clifford circuit after a suitable group extension~\cite{Ellison2022Pauli}. Consequently, these phases admit a Pauli stabilizer realization, and the 56-step process is trivial, $\mu_{56}=1$, unable to detect the corresponding anomalies.

The other generator, {\color{red}$\ZZ_{\gcd(3,N)}$}, is given by  
\begin{equation}
    {\color{red} \frac{1}{3} P^1(B_4) := -\frac{1}{3} D^3_4(B_4) }
    \;\in\; H^8(B^4\ZZ_N, \RR/\ZZ)~, \quad \text{for } N\equiv 0 \pmod{3}~,
\label{eq:P1(B4)}
\end{equation}
from which one can construct the corresponding wave function and parent Hamiltonian of the SPT phase.  
Since $D^3_4(B_4)$ expands into 177 terms, however, the explicit wave function is cumbersome to write down by hand.  
To address this, we developed an algorithm that extracts the 56-step invariant $\mu_{56}$ directly from the cocycle $\tfrac{1}{3} P^1(B_4)$, as detailed in SM~Sec.~\ref{sec:Algorithm for verifying membrane statistics}.
Using this approach, we construct the $(7{+}1)$D $3$-form $\ZZ_N$ SPT phase and show that domain-wall membrane excitations on its $(6{+}1)$D boundary exhibit nontrivial statistics $\mu_{56} = \exp(\frac{2\pi i}{3})$ for $N\equiv 0 \pmod{3}$.

\subsection{7 spatial dimensions}

Finally, we consider 7 spatial dimensions. For all higher dimensions, the results coincide with the 7-dimensional case since the cohomology stabilizes~\cite{LodayCyclicHomology}.
First, the {\color{blue}$\ZZ_{\gcd(2,N)}$} class is generated by  
\begin{equation}
    \frac{1}{2} Sq^4 B_5 \;=\; \frac{1}{2} B_5 \cup_1 B_5 \;\in\; H^{9}(B^{5}\ZZ_N, \RR/\ZZ)~, \quad \text{for } N\equiv 0 \pmod{2}~.
\end{equation}
As this corresponds to a Pauli stabilizer model, the 56-step process is trivial, $\mu_{56}=1$, and cannot detect the associated statistics.  

In contrast, the {\color{red}$\ZZ_{\gcd(3,N)}$} class is generated by  
\begin{equation}
    {\color{red} \frac{1}{3} P^1(B_5) :=  -\frac{1}{3} D^3_6(B_5) }
    \;\in\; H^9(B^5\ZZ_N, \RR/\ZZ)~, \quad \text{for } N\equiv 0 \pmod{3}~,
\label{eq:P1(B5)}
\end{equation}
where the cochain-level expression of $D^3_6(B_5)$ is given in SM~Sec.~\ref{app:Higher cup products, May–Steenrod structure, and Steenrod reduce powers} and involves 1110 terms built from triple products of $B_5$.  
Our computational algorithm shows that the 56-step process on the $(7{+}1)$D boundary of the corresponding $(8{+}1)$D SPT phase yields $\mu_{56} = \exp(\frac{2\pi i}{3})$ for $N\equiv 0 \pmod{3}$.

\section{Computational derivation of $\mu_{56}$}\label{app:Computational derivation}

As a statistical process for $\ZZ$-membranes in four spatial dimensions, $\mu_{56}$ appears unusual and highly asymmetric. It was not constructed by hand; instead, it was obtained by a computer-assisted search. While the algorithms of Refs.~\cite{xue2025statistics,kobayashi2024generalized} are sufficient to find statistical processes for $\ZZ_2$-membranes, extending them to $\ZZ$-membranes is substantially more technical and is not guaranteed to succeed.

Refs.~\cite{kobayashi2024generalized,xue2025statistics} reformulate the search for statistical processes as a linear-algebra problem and solve it using Smith normal form techniques. Due to computational complexity, those methods are practical mainly for $\ZZ_2$-membranes. A direct extension to $G=\ZZ$ is infeasible because the relevant spaces become infinite-dimensional. Our strategy is therefore to first obtain a process for $\ZZ_2$-membranes and then attempt to lift it to a process for $\ZZ$-membranes.

We briefly recall the framework of Ref.~\cite{xue2025statistics}. The basic input, the fusion group $G$, excitation dimension $p$, and spatial dimension $d$, is packaged into an \textbf{excitation pattern}.
\begin{definition}\label{defexcitationModel}
An \textbf{excitation pattern} $m$ in $d$ spatial dimensions consists of data $(A,S,\partial,\supp)$:
\begin{enumerate}
    \item An Abelian group $A$, called the \emph{configuration group}.
    \item A finite set $S$ and a map $\partial:S\to A$ such that $\{\partial s\mid s\in S\}$ generates $A$.
    \item For each $s\in S$, a region $\supp(s)\subset S^d$, where $S^d$ denotes the $d$-sphere.
\end{enumerate}
\end{definition}

In this work we specialize to the following simplicial models.
\begin{definition}
Given fusion group $G$, excitation dimension $p$, and spatial dimension $d$, we take
\begin{enumerate}
    \item $A=B_p(\partial\Delta^{d+1},G)$, the Abelian group of $p$-boundaries on $\partial\Delta^{d+1}\simeq S^d$, so we consider only closed excitations.
    \item $S$ to be the set of $(p{+}1)$-simplices of $\Delta^{d+1}$, with $\supp(s)$ equal to the simplex itself.
    \item $\partial$ to be the usual chain boundary map, sending a $(p{+}1)$-simplex $s$ to $\partial s\in B_p(\partial\Delta^{d+1},G)$.
\end{enumerate}
A \textbf{process} (such as $\mu_{56}$) is a word in generators $s\in S$ and their inverses, i.e.\ an element of the free group $\operatorname{F}(S)$.
\end{definition}

For fixed $(G,p,d)$, many lattice systems can realize the same  excitation pattern. Although their anomalies may differ (classified by $H^{d+2}(B^{d-p}G,\RR/\ZZ)$), they share universal locality constraints captured by the notion of realization.
\begin{definition}\label{defRealization}
A \textbf{realization} of an excitation pattern $m$ consists of a Hilbert space $\mathcal{H}$, a collection of orthonormal configuration states $\{\ket{a}|a\in A\}$, and a collection of \textit{excitation operators} $\{U(s)|s\in S\}$, satisfying the following two axioms.
	\begin{itemize}
			\item \textbf{The configuration axiom:} For any $s\in S$ and $a\in A$, the equation
			\begin{equation}\label{eqChangeConfig}
				U(s)|a\rangle=e^{i\theta(s,a)}|a+\partial s\rangle
			\end{equation}
			holds for some $\theta(s,a)\in \RR/2\pi\ZZ$.
			\item \textbf{The locality axiom:} \(\forall s_1, s_2, \dots, s_k \in S\) satisfying \(\operatorname{supp}(s_1) \cap \operatorname{supp}(s_2) \cap \cdots \cap \operatorname{supp}(s_k) = \emptyset\), the equation
			\begin{align}\label{axiomLocalityIdentity}
				[ U(s_k), [\cdots, [ U(s_2),  U(s_1)]]]=1 \in U(\mathcal{H}),
			\end{align}
			holds, where \([a,b] = a^{-1}b^{-1}ab\).
	\end{itemize}
The map $U:S\to U(\mathcal{H})$ extends to a homomorphism $\operatorname{F}(S)\to U(\mathcal{H})$, and Eq.~\eqref{axiomLocalityIdentity} is understood in this sense, i.e.\ $[U(s_k),[\cdots,[U(s_2),U(s_1)]]]=1$.
\end{definition}
The locality identity is automatic when supports are disjoint: if $\supp(s_1)\cap\supp(s_2)=\emptyset$, then $U(s_1)$ and $U(s_2)$ act on disjoint degrees of freedom, so $U([s_1,s_2])=1$. More generally, for finite-depth circuits $U(s)$, the commutator $U([s_2,s_1])$ is supported near $\supp(s_1)\cap\supp(s_2)$, hence $U([s_3,[s_2,s_1]])=1$ whenever $\supp(s_1)\cap\supp(s_2)\cap\supp(s_3)=\emptyset$, and similarly for higher nested commutators.

As in the main text, we write
\[
U(s)\ket{a}=e^{i\theta(U(s),a)}\ket{a+\partial s}.
\]
For $g\in\operatorname{F}(S)$, define $U(g)\ket{a}=e^{i\theta(U(g),a)}\ket{a+\partial g}$. Using
\begin{equation}\label{eqThetaP(g,a)}
\begin{aligned}
\theta(U(g_2g_1),a)&=\theta(U(g_1),a)+\theta(U(g_2),a+\partial g_1),\\
\theta(U(g^{-1}),a)&=-\theta(U(g),a-\partial g),
\end{aligned}
\end{equation}
we can expand $\theta(U(g),a)$ uniquely as an integer linear combination of phases $\theta(U(s),a)$. We regard the physically relevant data of a realization as the collection of diagonal matrix elements $\langle a|U(g)|a\rangle$ with $\partial g=0$; accordingly, two realizations are identified if their phase data $\{\theta(U(s),a)\in\RR/\ZZ\mid s\in S,\ a\in A\}$ agree.

To formalize measurement outcomes, let $\theta(g,a)$ denote the linear functional that maps a realization to $\theta(U(g),a)\in\RR/\ZZ$. It decomposes into the generators $\theta(s,a)$ via the relations
\begin{equation}\label{eqTheta(g,a)}
\begin{aligned}
\theta(g_2g_1,a)&=\theta(g_1,a)+\theta(g_2,a+\partial g_1),\\
\theta(g^{-1},a)&=-\theta(g,a-\partial g).
\end{aligned}
\end{equation}
Let $E$ be the free Abelian group generated by the symbols $\theta(s,a)$, representing all such measurements. An element of $E$ is an \textbf{expression} of the form $\sum_i c_i\,\theta(s_i,a_i)$ with $c_i\in\ZZ$. The subset $\{\theta(g,a)\mid g\ \text{is a statistical process}\}$ forms a subgroup $E_{\mathrm{inv}}\subset E$, whose elements are \textbf{statistical expressions}. The search for statistical processes can thus be reduced to linear algebra.

Locality imposes further structures. From Eq.~\eqref{axiomLocalityIdentity}, for any $s_1,\ldots,s_k$ with empty common intersection of supports, the statistical expression $\theta([s_k,[\cdots,[s_2,s_1]]],a)\in E_{\mathrm{inv}}$ evaluates to zero in every realization. Such relations generate a subgroup $E_{\mathrm{id}}\subset E_{\mathrm{inv}}$ of expressions that are physically trivial. Consequently, statistical expressions are classified by
\[
T:=E_{\mathrm{inv}}/E_{\mathrm{id}},
\]
which is Pontryagin dual to the anomaly classification by construction. When the spatial simplicial complex is $\partial\Delta^{d+1}$, Theorem III.2 of Ref.~\cite{xue2025statistics} gives $T\simeq H_{d+2}(B^{d-p}G,\ZZ)$, dual to $H^{d+2}(B^{d-p}G,\RR/\ZZ)$. Although $E_{\mathrm{inv}}$ and $E_{\mathrm{id}}$ are infinite-dimensional for $G=\ZZ$, the quotient $T$ is still well defined and finitely generated.

In practice, we compute $T$ by extracting the torsion subgroup of $E/E_{\mathrm{id}}$, which agrees with $T$ when $G=\ZZ_n$ (see Theorem~VI.9 of Ref.~\cite{xue2025statistics}). Concretely, we assemble a sparse integer matrix whose rows encode generators of $E_{\mathrm{id}}$ (most nonzero entries are $\pm1$), and perform integer row operations to eliminate $\pm1$ entries. After sufficient reductions, the residual matrix typically has very small rank. For $\ZZ_2$-membranes in $d=4$, this procedure yields a residual rank-one relation generated by an expression $e$ divisible by $2$, from which one concludes $T=\ZZ_2$ generated by $e/2$. Writing $e/2=\theta(g,0)$ then produces a nontrivial statistical process $g$ for $\ZZ_2$-membranes.

The resulting $g$ may not define a valid process for $\ZZ$-membrane because its local cancellation condition is stricter. We write
\[
    \theta(g,0)=\sum_{s\in S,\ a\in B_2(\Delta^5,\ZZ)} c(s,a)\,\theta(s,a),\qquad c(s,a)\in\ZZ.
\]
The $\ZZ$ cancellation condition at a vertex $x$ then requires that, for every $\alpha\in C_2(\Delta^5,\ZZ)|_x$ (the free Abelian group generated by $2$-simplices containing $x$), one has
\begin{equation}\label{eqLocalCancellation}
    \sum_{s,a:\ x\in\supp(s),\ a|_x=\alpha} c(s,a)=0.
\end{equation}
By contrast, for $\ZZ_2$-membranes the constraint only depends on the class $[\alpha]\in C_2(\Delta^5,\ZZ_2)|_x$:
\[
\sum_{s,a:\ x\in\supp(s),\ [a|_x]=[\alpha]} c(s,a)=0.
\]
Since distinct $\alpha$ can map to the same $[\alpha]$, the $\ZZ$ condition imposes many more independent constraints.

To bridge this gap, we impose an auxiliary restriction. We choose a function $f$ that assigns $\pm1$ to each $2$-simplex of $\Delta^5$ and restrict to processes $g$ whose intermediate configurations $a\in B_2(\Delta^5,\ZZ)$ (along the action of $g$ on the vacuum) satisfy
\begin{equation}
    a(t)\in\{0,f(t)\}\quad \text{for every $2$-simplex $t$}.
\end{equation}

Under this restriction, Eq.~\eqref{eqLocalCancellation} becomes effectively a single nontrivial condition for each $[\alpha]$, so the $\ZZ$ and $\ZZ_2$ cancellation criteria align.

This condition can be implemented at the level of expressions. Given $a\in B_2(\Delta^5,\ZZ_2)$, define its lift $\tilde a\in C_2(\Delta^5,\ZZ)$ by $\tilde a(t)=f(t)a(t)$, and call $a$ \textbf{legal} if $\tilde a\in B_2(\Delta^5,\ZZ)$. We further call $\theta(s,a)$ \textbf{legal} if for every $2$-simplex $t$,
\begin{equation}\label{eqLegal}
(\partial s+\tilde a)(t)\in\{0,f(t)\},
\end{equation}
where $\partial s$ is taken with $\ZZ$ coefficients. If every term in a statistical expression is legal, we obtain a corresponding $\ZZ$-membrane expression by replacing $a$ with $\tilde a$. In the elimination procedure above, we therefore only use row operations that eliminate $\pm1$ entries in \emph{illegal} columns, and we stop once no such entries remain. If all illegal columns are eliminated, the remaining relations involve only legal terms and may yield a nontrivial legal statistical expression; otherwise, the attempt fails.

Because legality depends on the choice of $f$, many trials are required. Most choices fail, and each run takes a few hours on a standard desktop computer, making a naive search impractical. We accelerate the search by exploiting shared structure among different choices of $f$: if $f_1(t)=f_2(t)$ for a simplex $t$, then the legality condition \eqref{eqLegal} coincides on $t$, allowing common elimination steps. Using this optimization, we ran the search in parallel on 20 CPUs and found a suitable $f$ after several hours.
Continuing the reduction yields the process $\mu_{56}$. Notably, all operators appearing in $\mu_{56}$ have supports containing the vertex $0$, consistent with Theorem~VI.11 of Ref.~\cite{xue2025statistics}; in the implementation, terms $\theta(s,a)$ with $0\notin\supp(s)$ were also implemented as illegal.

\section{Higher cup products, May–Steenrod structure, and Steenrod reduced powers}\label{app:Higher cup products, May–Steenrod structure, and Steenrod reduce powers}

We begin with a short review of simplicial cohomology, following the notation of Refs.~\cite{Chen2021Disentangling, Chen2023HigherCup}. We recall the definitions of cochains, the coboundary operator, and cup products, together with their higher versions. For generalizations to group and higher-group cohomology, see Refs.~\cite{chen2012symmetry, Kapustin2017Higher}.
A related survey of these topics can be found in Ref.~\cite{M5}.

\subsection{Simplicial Cohomology} \label{app:terminology}

Let $M$ be a simplicial complex or a cellulation of a manifold. An $n$-simplex (or cell) is denoted by $\sigma_n$. In low dimensions we use the usual names: $\sigma_0 \equiv v$ for a vertex, $\sigma_1 = e$ for an edge, $\sigma_2 = f$ for a face, and $\sigma_3 = t$ for a tetrahedron or $\sigma_3 = c$ for a cube. In general, a $p$-simplex is specified by its vertices, written as $\langle 0, 1, 2,\ldots, p \rangle$. A $p$-chain is a formal linear combination of $p$-simplices.
The collection of $p$-chains forms the group $C_p(M,\ZZ)$. Writing a generic $p$-chain as $c_p$, we have $c_p = \sum_{\sigma_p} a_{\sigma_p} \sigma_p$ with coefficients $a_{\sigma_p} \in \ZZ$.

The boundary operator 
\begin{equation}
    \partial: C_p(M, \ZZ) \to C_{p-1}(M, \ZZ).
\end{equation}
maps a $p$-simplex to the sum of its oriented boundary simplices. For the oriented $p$-simplex $\langle 0, 1, \dots, p \rangle$, the boundary is 
\begin{equation}
    \partial \langle 0, 1, \dots, p \rangle = \sum_{i=0}^p ~(-1)^i ~\langle 0,\dots, \hat{i},\dots,p \rangle,
\end{equation}
where $\langle 0,\dots, \hat{i},\dots,p \rangle$ denotes the $i$th face obtained by removing vertex $i$.

Let $R$ be a ring, typically assumed to be $R=\ZZ$ or $\ZZ_N$. A $p$-cochain is a function that assigns a $R$-value to each $p$-simplex, linearly extending to a map from $C_p(M,\ZZ)$ to $R$. The set of all $p$-cochains is denoted $C^p(M, R)$. For each vertex $v$, we define a "dual" 0-cochain $\boldsymbol{v}$ via
\begin{align}
    \boldsymbol{v}(v') = 
    \begin{cases}
      1\in R & \text{if } v' = v,\\
      0 & \text{otherwise}.
    \end{cases}
    \label{example: chain cochain correspondance}
\end{align}
Here we use bold symbols to distinguish cochains from chains. Similarly, we define $\boldsymbol{e}$ and $\boldsymbol{f}$ for each edge $e$ and face $f$. For $c_p= \sum_{\sigma_p} a_{\sigma_p} \sigma_p\in C_p(M,\ZZ)$, we define $\boldsymbol{c}_p = \sum_{\sigma_p} a_{\sigma_p} \boldsymbol{\sigma}_p\in C^p(M,R)$.

The coboundary operator $\delta$ is dual to $\partial$, mapping a $p$-cochain to a $(p+1)$-cochain:
\begin{align}
    \delta: C^p(M, R) \to C^{p+1}(M,R).
\end{align}
For a $p$-cochain $\boldsymbol{c}$ and a $(p+1)$-simplex $s$, it is defined using $\partial$ as
\begin{align} \label{coboundarydef1}
    \delta \boldsymbol{c}(s) = \boldsymbol{c}(\partial s).
\end{align}
The cup product $\cup$ combines a $p$-cochain and a $q$-cochain into a $(p+q)$-cochain:
\begin{align}
    \cup: C^p(M, R) \times C^q(M, R) \to C^{p+q}(M, R).
\end{align}
For example, if $\boldsymbol{c} \in C^p(M, \mathbb{Z}_2)$ and $\boldsymbol{d} \in C^q(M, \mathbb{Z}_2)$, then on a $(p+q)$-simplex $\langle 0, 1, \ldots, p+q \rangle$,
\begin{align}
    \bigl(\boldsymbol{c} \cup \boldsymbol{d}\bigr)\bigl(\langle 0, 1,\ldots, p+q \rangle\bigr) 
    = \boldsymbol{c}\bigl(\langle 0, 1,\ldots, p \rangle\bigr)\,\boldsymbol{d}\bigl(\langle p, p+1, \ldots, p+q \rangle\bigr).
\end{align}
The coboundary satisfies the Leibniz rule:
\begin{align} \label{leibnizrule}
    \delta (\boldsymbol{c} \cup \boldsymbol{d})
    = \delta \boldsymbol{c} \cup \boldsymbol{d} + (-1)^p\boldsymbol{c} \cup \delta \boldsymbol{d},
\end{align}
for any $p$-cochain $\boldsymbol{c}$ and $q$-cochain $\boldsymbol{d}$.

One can also define higher cup products. A cup-$i$ product $\cup_i$ maps a $p$-cochain and a $q$-cochain to a $(p+q-i)$-cochain:
\begin{align}
    \cup_i : C^p(M, R) \times C^q(M, R) \to C^{p+q-i}(M,R).
\end{align}
Explicit formulas for higher cup products on simplicial complexes are given in Refs.~\cite{Steenrod1947, Gaiotto2015SpinTQFTs, lokman2020lattice, M1, M2}. They obey the recursive relation
\begin{align} \label{highercupleibniz}
    \delta (\boldsymbol{c} \cup_i \boldsymbol{d}) 
    = \delta \boldsymbol{c} \cup_i \boldsymbol{d} 
    + (-1)^p\boldsymbol{c} \cup_i \delta \boldsymbol{d} 
    + (-1)^{p+q-i}\boldsymbol{c} \cup_{i-1} \boldsymbol{d} 
    + (-1)^{pq+p+q}\boldsymbol{d} \cup_{i-1} \boldsymbol{c},
\end{align}
with $\cup_0 \equiv \cup$.

For the cup–1 product we have
\begin{eqs}
    \bigl(\boldsymbol{c}_p \cup_1 \boldsymbol{d}_q\bigr)\bigl(\langle 0, \ldots, p+q-1 \rangle\bigr) 
    &= \sum_{i=0}^{p-1} (-1)^{(p-i)(q+1)}\boldsymbol{c}_p\bigl(\langle 0, \ldots, i, q+i, \ldots, p+q-1 \rangle\bigr)~
    \boldsymbol{d}_q\bigl(\langle i, \ldots, q+i \rangle\bigr) \\
    &\equiv \sum_{i=0}^{p-1} (-1)^{(p-i)(q+1)} \boldsymbol{c}_p\bigl(\langle 0 \rightarrow i,~ q+i \rightarrow p+q-1 \rangle\bigr)~
    \boldsymbol{d}_q\bigl(\langle i \rightarrow q+i \rangle\bigr).
\end{eqs}
For the cup–2 product (non-zero only when $2\le p,q$), we obtain
\begin{eqs}
    \bigl(\boldsymbol{c}_p \cup_2 \boldsymbol{d}_q\bigr)\bigl(\langle 0, \ldots, p+q-2 \rangle\bigr)
    &=\sum_{0 \le i_1 < i_2 \le p+q-2}
    (-1)^{(p-i_1)(i_2-i_1-1)}~
    \boldsymbol{c}_p (\langle 0\rightarrow i_1, ~i_2 \rightarrow p+i_2-i_1-1 \rangle)\\
    & \qquad \qquad \boldsymbol{d}_q (\langle i_1\rightarrow i_2, ~p+i_2-i_1-1 \rightarrow p+q-2\rangle).
\end{eqs}
For a general integer $k$ with $0\le k\le\min\{p,q\}$, the cup-$k$ product is
\begin{eqs}
    &\bigl(\boldsymbol{c}_p \cup_k \boldsymbol{d}_q \bigr) \bigl(\langle0,1,\dots ,p+q-k\rangle\bigr) \\
    &= \sum_{0\le i_1<\cdots<i_k\le p+q-k}
    (-1)^{\mathsf{sgn}}  \boldsymbol{c}_p \bigl( \langle 0 \rightarrow i_{1}, i_{2} \rightarrow i_{3},~ \dots \rangle \bigr)
    \boldsymbol{d}_q \bigl( \langle i_{1} \rightarrow i_{2},\; i_{3} \rightarrow i_{4},~ \dots \rangle \bigr),
\end{eqs}
where $\mathsf{sgn}(i_{1},\dots ,i_{k})$ counts the permutations required
to rearrange the interleaved sequence
\[
  0\rightarrow i_{1},\; i_{2}\rightarrow i_{3},\;\dots;\;
  i_{1}+1\rightarrow i_{2}-1,\; i_{3}+1\rightarrow i_{4}-1,\;\dots
\]
into the natural order $0\rightarrow p+q-k$.
Finally, as a convenient shorthand, we write
\begin{align}
    \int_{N_p} \boldsymbol{c}_p \equiv \sum_{s_p \in N_p} \boldsymbol{c}_p(s_p),
\end{align}
where $N_p$ denotes a $p$-dimensional manifold. When the context is clear, we will omit the subscript and simply write $\int \boldsymbol{c}$.

\subsection{May-Steenrod Structure}
While higher cup products provide a simple cochain-level representation of operations, including the Steenrod square and the Pontryagin square, it proves difficult (if not impossible) to use them to construct more complicated operations, such as the Steenrod reduced power~\cite{Steenrod1947, George1970Appreciation, May1970algebraic}.
To obtain cochain-level representations of the Steenrod reduced power, we need a higher version of cup products called the May-Steenrod structure~\cite{Medina2021CochainMaySteenrod, M3}.
May-Steenrod structures can be combined with combinatorial operad actions to produce the cochain-level operations that we want. Our exposition largely follows Ref.~\cite{Medina2021CochainMaySteenrod, M4}, in which the process to obtain the Steenrod reduced powers at the cochain level is explained in detail.

Let $k,r\geq 0$, and consider a surjection $u:\{1,\cdots k\}\to \{1,\cdots,r\}$. It is called non-degenerate, if we have $u(i)\neq u(i+1)$ for all $1\leq i<k$. In the following, this kind of $u$ is simply called a ``non-degenerate surjection". We borrow the name from the mathematics literature and call the set of non-degenerate surjections a surjection operad. 

Each element $u:\{1,\cdots k\}\to \{1,\cdots,r\}$ of this operad corresponds to an operation that combines $r$ cochains to form a new cochain~\cite{MS, Matthias2021Homotopy}: 
\begin{eqs}
    \phi(u):C^{p_1}(M,R)\times C^{p_2}(M,R)\cdots\times C^{p_r}(M,R)\to C^n(M,R)~.
\end{eqs}
Given $r$ cochains $c_1,c_2,\cdots,c_r$, the map $\phi(u)$ is 
\begin{eqs}
    \phi(u)(c_1,\cdots,c_r)=\sum_{0=i_0\leq i_1\leq \cdots\leq i_{k-1}\leq i_{k}=n}(-1)^{\mathsf{sgn}} c_1(\langle \mu_1\rangle)c_2(\langle\mu_2\rangle)\cdots c_r(\langle\mu_r\rangle)~,
\end{eqs}
where $\mu_s$ is built from the concatenation of the intervals $[i_{t-1},i_t]$ such that $u(t)=s$, and the sign factor is explained below. 
We call an interval $I=[i_{t-1},i_t]$ final if $u(t')\neq u(t)$ for all $t'>t$, and call it inner otherwise. The length of the interval $I$ is defined to be $\mathrm{len}_u(I)=i_t-i_{t-1}$ if $I$ is final, and $i_t-i_{t-1}+1$ if $I$ is inner. 

Consider a sorting process that takes $(u(1),u(2),\cdots,u(k))$ to $(1,\cdots,1,2,\cdots,2,\cdots,r,\cdots,r)$, in which only adjacent elements with different values can be exchanged. When we exchange two intervals $I_1$ and $I_2$, there is a sign $(-1)^{\mathrm{len}_u(I_1)\mathrm{len}_u(I_2)}$. We take the product of all these signs to obtain the permutation sign. Also, we obtain the position sign by taking the product of $(-1)^{i_t}$ for each inner interval $[i_{t-1},i_t]$. The total sign $(-1)^\mathsf{sgn}$ is the product of the permutation sign and the position sign. 
As an example, we derive the action of the surjection $u=(1,2,3,1,2)$ on $3$-cochains. The result is a $7$-cochain 
\begin{eqs}
    \phi(u)(c_1,c_2,c_3)=&~~c_1(\langle 0,3,4,5\rangle)~c_2(\langle 0,5,6,7\rangle)~c_3(\langle 0,1,2,3\rangle)
    +c_1(\langle 0,4,5,6\rangle)~c_2(\langle 0,1,6,7\rangle)~c_3(\langle 1,2,3,4\rangle)\\
    &+c_1(\langle 0,5,6,7\rangle)~c_2(\langle 0,1,2,7\rangle)~c_3(\langle 2,3,4,5\rangle)
    +c_1(\langle 0,1,4,5\rangle)~c_2(\langle 1,5,6,7\rangle)~c_3(\langle 1,2,3,4\rangle)\\
    &-c_1(\langle 0,1,5,6\rangle)~c_2(\langle 1,2,6,7\rangle)~c_3(\langle 2,3,4,5\rangle)
    +c_1(\langle 0,1,6,7\rangle)~c_2(\langle 1,2,3,7\rangle)~c_3(\langle 3,4,5,6\rangle)\\
    &+c_1(\langle 0,1,2,5\rangle)~c_2(\langle 2,5,6,7\rangle)~c_3(\langle 2,3,4,5\rangle)
    +c_1(\langle 0,1,2,6\rangle)~c_2(\langle 2,3,6,7\rangle)~c_3(\langle 3,4,5,6\rangle)\\
    &+c_1(\langle 0,1,2,7\rangle)~c_2(\langle 2,3,4,7\rangle)~c_3(\langle 4,5,6,7\rangle)~.\label{Example 12312 on 3cochain}
\end{eqs}
The signs in Eq.~\eqref{Example 12312 on 3cochain} are in accordance with the above sign convention. Note that only the first two intervals are inner, so the length of the term $c_1(\langle 0,3,4,5\rangle)~c_2(\langle 0,5,6,7\rangle)~c_3(\langle 0,1,2,3\rangle)$ is $\mathrm{len}_u=(1,1,3,2,2)$, and the permutation sign is $1$. Also, the position sign is $(-1)^{0+0}=1$. On the other hand, the length of the term $c_1(\langle 0,1,5,6\rangle)~c_2(\langle 1,2,6,7\rangle)~c_3(\langle 2,3,4,5\rangle)$ is $\mathrm{len}_u=(2,2,3,1,1)$, which gives a trivial permutation sign and a position sign $(-1)^{1+2}=-1$. Therefore, we obtain a total sign of $-1$. 

Then we describe the May-Steenrod structure on the surjection operad. For every $r\geq 2,n\geq 0$, we obtain a linear combination of surjections $\psi(r)(e_n)$ following the rules below. 

Let $\rho$ be the cyclic permutation of $\{1,\cdots,r\}$. It acts on each surjection $u:\{1,\cdots,k\}\to \{1,\cdots,r\}$ by $\rho\cdot u(t)=\rho(u(t))$. Using the action $\rho$, we define two operators
\begin{eqs}
    T&=\rho-1\\
    N&=1+\rho+\rho^2+\cdots+\rho^{r-1}
\end{eqs}
that annihilates each other. Furthermore, we define three maps $i,p,s$, satisfying
\begin{align}
    i(u_1,u_2,\cdots,u_k)&=(1,u_1+1,u_2+1,\cdots,u_k+1)\\
    p(u_1,u_2,\cdots,u_{j-1},1,u_{j+1},\cdots,u_k)&=(u_1-1,u_2-1,\cdots,u_{j-1}-1,u_{j+1}-1,\cdots,u_k-1)\\
    s(u_1,u_2,\cdots,u_k)&=(1,u_1,u_2,\cdots,u_k)~.
\end{align}
When there are more than one $u_j=1$, we have $p(u_1,u_2,\cdots,u_k)=0$ by convention. The contraction map $h$ is defined as
\begin{eqs}
    h=s+isp+i^2sp^2+\cdots+i^{r-1}sp^{r-1}~.
\end{eqs}
Now, we define $\psi(r)(e_n)$ inductively by
\begin{enumerate}
    \item $\psi(r)(e_0)=(1,2,\cdots,r)$
    \item $\psi(r)(e_{2m+1})=hT\psi(r)(e_{2m})$
    \item $\psi(r)(e_{2m})=hN\psi(r)(e_{2m-1})$
\end{enumerate}
For example, when $r=3$, we obtain $\psi(3)(e_n)$ for the lowest values of $n$, 
\begin{eqs}
    \psi(3)(e_0)=&~(1,2,3)~,\\
    \psi(3)(e_1)=&~(1,2,3,1)~,\\
    \psi(3)(e_2)=&~(1,2,3,1,2)+(1,3,1,2,3)+(1,2,3,2,3)~,\\
    \psi(3)(e_4)=&~(1,2,3,1,2,3,1)+(1,2,3,2,3,1,2)+(1,2,3,1,2,1,2)+(1,3,1,2,3,1,2)\\
    &+(1,3,1,3,1,2,3)+(1,2,3,2,3,2,3)+(1,3,1,2,3,2,3)~,\\
    \psi(3)(e_6)=&~(1,2,3,1,2,3,1,2,3)+(1,2,3,1,2,1,2,3,1)+(1,2,3,1,2,3,1,3,1)\\
    &+(1,2,3,2,3,1,2,3,1)+(1,2,3,2,3,2,3,1,2)+(1,2,3,2,3,1,2,1,2)\\
    &+(1,2,3,1,2,1,2,1,2)+(1,3,1,2,3,1,2,3,1)+(1,3,1,2,3,2,3,1,2)\\
    &+(1,3,1,2,3,1,2,1,2)+(1,3,1,3,1,2,3,1,2)+(1,3,1,3,1,3,1,2,3)\\
    &+(1,3,1,3,1,2,3,2,3)+(1,3,1,2,3,2,3,2,3)+(1,2,3,2,3,2,3,2,3)~.
\end{eqs}
When $r$ is a prime, we can define the Steenrod cup-$(r,i)$ product between $\ZZ_r$ coefficient cochains
\begin{eqs}
    D_i^r(B)=\phi\left(\psi(r)(e_i)\right)(B,B,\cdots,B)~,
\end{eqs}
where each of the $r$ inputs takes the value $B$. Now, the Steenrod reduced powers 
\begin{eqs}
    P^s:H^q(M,\ZZ_r)\to H^{q+2s(r-1)}(M,\ZZ_r)
\end{eqs}
have the cochain level expression
\begin{eqs}
    P^s=(-1)^s\nu(q)D^r_{(q-2s)(r-1)}~,
\end{eqs}
where $\nu(q)=(-1)^{q(q-1)m/2}(m!)^q$ and $m=(r-1)/2$. 

We are interested in the case where $r=3$ and $s=1$, and the Steenrod reduced power is
\begin{eqs}
    P^1(B_q)=(-1)^{q(q-1)/2+1}D^3_{2(q-2)}(B_q)~.\label{P1D3}
\end{eqs}
Using Eq.~\eqref{P1D3}, explicit expressions for $P^1(B_2)$ and $P^1(B_3)$ can be obtained as illustrative examples: 
\begin{eqs}
    P^1(B_2)(\langle0,1,\cdots,6\rangle) &= B_2\cup B_2\cup B_2(\langle0,1,\cdots,6\rangle) \\
    &= B_2(\langle0,1,2\rangle)~B_2(\langle2,3,4\rangle)~B_2(\langle4,5,6\rangle)~,
\end{eqs}
and
\begin{eqs}
    &P^1(B_3)(\langle0,1,\cdots,7\rangle) \\
    =&~~B_3(\langle0,3,4,5\rangle)~B_3(\langle0,5,6,7\rangle)~B_3(\langle0,1,2,3\rangle)
    +B_3(\langle0,4,5,6\rangle)~B_3(\langle0,1,6,7\rangle)~B_3(\langle1,2,3,4\rangle)\\
    &+B_3(\langle0,5,6,7\rangle)~B_3(\langle0,1,2,7\rangle)~B_3(\langle2,3,4,5\rangle)
    +B_3(\langle0,1,4,5\rangle)~B_3(\langle1,5,6,7\rangle)~B_3(\langle1,2,3,4\rangle)\\
    &-B_3(\langle0,1,5,6\rangle)~B_3(\langle1,2,6,7\rangle)~B_3(\langle2,3,4,5\rangle)
    +B_3(\langle0,1,6,7\rangle)~B_3(\langle1,2,3,7\rangle)~B_3(\langle3,4,5,6\rangle)\\
    &+B_3(\langle0,1,2,5\rangle)~B_3(\langle2,5,6,7\rangle)~B_3(\langle2,3,4,5\rangle)
    +B_3(\langle0,1,2,6\rangle)~B_3(\langle2,3,6,7\rangle)~B_3(\langle3,4,5,6\rangle)\\
    &+B_3(\langle0,1,2,7\rangle)~B_3(\langle2,3,4,7\rangle)~B_3(\langle4,5,6,7\rangle)-B_3(\langle0,1,2,3\rangle)~B_3(\langle3,4,5,6\rangle)~B_3(\langle0,1,6,7\rangle)\\
    &-B_3(\langle0,2,3,4\rangle)~B_3(\langle4,5,6,7\rangle)~B_3(\langle0,1,2,7\rangle)-B_3(\langle0,1,2,3\rangle)~B_3(\langle3,4,5,6\rangle)~B_3(\langle1,2,6,7\rangle)\\
    &+B_3(\langle0,1,3,4\rangle)~B_3(\langle4,5,6,7\rangle)~B_3(\langle1,2,3,7\rangle)-B_3(\langle0,1,2,3\rangle)~B_3(\langle3,4,5,6\rangle)~B_3(\langle2,3,6,7\rangle)\\
    &-B_3(\langle0,1,2,4\rangle)~B_3(\langle4,5,6,7\rangle)~B_3(\langle2,3,4,7\rangle)-B_3(\langle0,1,2,3\rangle)~B_3(\langle3,4,5,6\rangle)~B_3(\langle3,4,6,7\rangle)\\
    &+B_3(\langle0,1,2,3\rangle)~B_3(\langle3,5,6,7\rangle)~B_3(\langle3,4,5,7\rangle)-B_3(\langle0,1,2,3\rangle)~B_3(\langle3,4,5,6\rangle)~B_3(\langle4,5,6,7\rangle)\\
    &-B_3(\langle0,1,2,3\rangle)~B_3(\langle3,4,6,7\rangle)~B_3(\langle4,5,6,7\rangle)~,
\end{eqs}
which has been shown in Ref.~\cite{Medina2021CochainMaySteenrod}. 

\section{Fusion group change and dimensional reduction}\label{app:Fusion group change and dimensional reduction}

Statistical processes and anomalies can be viewed as dual objects, jointly determining a phase factor in $\RR/\ZZ \cong U(1)$. Concretely, anomalies (cocycles) define the excitations on the boundary of SPT phases, and substituting the corresponding unitary operators into the statistical process yields the associated phase factor.
As discussed above, a statistical process for $\ZZ$ membranes naturally induces one for $\ZZ_N$ membranes, and a process defined in $d$ spatial dimensions can be embedded into higher dimensions as a (possibly trivial) process.
From the dual perspective, these relations can be reversed. Specifically, for a lattice system with $\ZZ_N$ membrane excitations in $d$ dimensions, one may  
\begin{enumerate}
    \item regard the fusion group as $\ZZ$ by neglecting the relation that $N$ excitations fuse to the vacuum;  
    \item reinterpret the system as $(d-1)$-dimensional by restricting to excitations and operators within a $(d-1)$-dimensional subspace.  
\end{enumerate}
Both procedures have natural interpretations in algebraic topology.  
First, the surjection $\ZZ \to \ZZ_N$ induces a map  
\begin{equation}
    H^{d+2}(B^{d-2}\ZZ_N,\RR/\ZZ)\;\longrightarrow\; H^{d+2}(B^{d-2}\ZZ,\RR/\ZZ)~,
\end{equation}
which corresponds to changing the fusion group.  
Second, the spectrum map $\Sigma K(G,d-3)\to K(G,d-2)$ induces a canonical map  
\begin{equation}
    H^{d+2}(B^{d-2}G,\RR/\ZZ)\;\longrightarrow\; H^{d+1}(B^{d-3}G,\RR/\ZZ)~,
\end{equation}
which corresponds to dimensional reduction. While these identifications remain conjectural, our explicit computations of the induced maps between generalized statistics are consistent with these expectations and thereby provide supporting evidence.  

These relationships are summarized in Fig.~\ref{fig:Z3_Z_cohomology_mapping} and Fig.~\ref{fig:Zp_Z_cohomology_mapping}, which presents a commutative diagram for $\ZZ$ and $\ZZ_p$ membranes across different dimensions. For any group element, tracing the arrows to the {\color{red}$\RR/\ZZ$} entry in the upper right yields the corresponding phase factor of $\mu_{56}$. We can clearly see from the table that only the statistics of $\ZZ$- and $\ZZ_{3N}$-membranes are detectable in $d\geq5$ spatial dimensions.

\begin{figure}
\centering
\begin{adjustbox}{scale=0.9}
\begin{tikzpicture}[>=Latex]
\tikzset{lab/.style={midway, font=\scriptsize, inner sep=0.3ex}}
\matrix (m) [matrix of math nodes, row sep=1.5em, column sep=2em, nodes={anchor=center}]{
  d  & |(h7)| 7 & |(h6)| 6 & |(h5)| 5 & |(h4)| 4  \\
 |(opT)| H^{d-2}(\cdot,\ZZ)\rightarrow H^{d+2}(\cdot,\RR/\ZZ)& |(t7)| \tfrac13 P^1(B_5)~;~ \tfrac12 Sq^4(B_5) & |(t6)| \tfrac13 P^1(B_4)~;~ \lambda B_4^{2} & |(t5)| \tfrac13 P^1(B_3) & |(t4)| \lambda B_2^{3} \\
 |(lab1)| H^{d+2}\!\big(B^{d-2}\mathbb Z,\ \mathbb R/\mathbb Z\big) &
 |(U7)| \mathbb Z_{3}\times\mathbb Z_{2} &
 |(U6)| \mathbb Z_{3}\times\mathbb R/\mathbb Z &
 |(U5)| \mathbb Z_{3} &
 |(U4)| {\color{red}\mathbb R/\mathbb Z } \\
 |(lab2)| H^{d+2}\!\big(B^{d-2}\mathbb Z_{3},\ \mathbb R/\mathbb Z\big) &
 |(L7)| \ZZ_3 &
 |(L6)| \ZZ_3\times \ZZ_3 &
 |(L5)| \ZZ_3\times \ZZ_3 &
 |(L4)| \ZZ_{9}  \\
 |(opB)| H^{d-2}(\cdot,\ZZ_3)\rightarrow H^{d+2}(\cdot,\RR/\ZZ) &
 |(b7)| \frac{1}{3} P^1(B_5) &
 |(b6)| \frac{1}{3}P^1(B_4)~;~\tfrac13 B_4^{2} &
 |(b5)| \frac{1}{3}P^1(B_3)~;~\tfrac19 B_3\cup \delta B_3 &
 |(b4)| \text{Pontryagin}  \\
};
\draw (-9,1.75)-- (9,1.75);   
\draw (-4.5,-3) -- (-4.5,3);  
\path[->] (U7) edge node[lab,above]{$(1,0)~;~\left(0,\tfrac12\right)$} (U6);
\path[->] (U6) edge node[lab,above]{$1~;~0$} (U5);
\path[->] (U5) edge node[lab,above]{$\tfrac13$} (U4);

\path[->] (L7) edge node[lab,above]{$(1,0)$} (L6);
\path[->] (L6) edge node[lab,above]{$(1,0)~;~(0,0)$} (L5);
\path[->] (L5) edge node[lab,above]{$3~;~0$} (L4);

\path[->] (L7) edge node[lab,left]{$(1,0)$} (U7);
\path[->] (L6) edge node[lab,left]{$(1,0)~;~(0,\tfrac13)$} (U6);
\path[->] (L5) edge node[lab,left]{$1~;~0$} (U5);
\path[->] (L4) edge node[lab,left]{$\tfrac19$} (U4);
\end{tikzpicture}
\end{adjustbox}
    \caption{Relevant cohomology groups of $K(\ZZ,d-2)$ and $K(\ZZ_3,d-2)$ and the maps between them. Generators are indicated above or below each group in terms of cohomology operations. We take the generator $1$ for $\ZZ_n$, and $(1,0);(0,1)$ for $\ZZ_m\times \ZZ_n$. Since $\RR/\ZZ$ is not finitely generated, its elements are written with a parameter $\lambda \in \RR/\ZZ$.
    Here $B_n$ denotes an $n$-cocycle, $P^1$ is the first Steenrod reduced power for $p=3$, $Sq^i$ denotes the Steenrod square operation, and ``Pontryagin'' refers to the higher Pontryagin power defined in Eq.~\eqref{eq:higher Pontryagin power}.
    Horizontal arrows represent homomorphisms induced by the spectrum map, and vertical arrows those induced by the change of fusion group. Each arrow is annotated with the images of the corresponding generators. For instance, the label $3;0$ indicates that $\tfrac{1}{3}P^1(B_3)$ maps to three times the Pontryagin operation while $\tfrac{1}{9}B_3\cup\delta B_3$ maps to zero; the label $(1,0);(0,\tfrac{1}{2})$ indicates that $\tfrac{1}{3}P^1(B_5)$ maps to $\tfrac{1}{3}P^1(B_4)$, while $\tfrac{1}{2}B_5\cup_1 B_5$ maps to $\tfrac{1}{2}B_4^2$. For any cohomology class in this commutative diagram, the value of $\mu_{56}$ is obtained from its image in the upper-right $\RR/\ZZ$.}
    \label{fig:Z3_Z_cohomology_mapping}
\end{figure}

\begin{figure}[htb]
\centering
\begin{adjustbox}{scale=0.9}
\begin{tikzpicture}[>=Latex]
\tikzset{lab/.style={midway, font=\scriptsize, inner sep=0.3ex}}
\matrix (m) [matrix of math nodes, row sep=1.5em, column sep=3em, nodes={anchor=center}]{
  d & |(h7)| 7 & |(h6)| 6 & |(h5)| 5 & |(h4)| 4 \\
 |(opT)| H^{d-2}(\cdot,\ZZ)\rightarrow H^{d+2}(\cdot,\RR/\ZZ)& |(t7)| \tfrac13 P^1(B_5)~;~ \tfrac12 Sq^4(B_5) & |(t6)| \tfrac13 P^1(B_4)~;~ \lambda B_4^{2} & |(t5)| \tfrac13 P^1(B_3) & |(t4)| \lambda B_2^{3} \\
 |(lab1)| H^{d+2}\!\big(B^{d-2}\mathbb Z,\ \mathbb R/\mathbb Z\big) &
 |(U7)| \mathbb Z_{3}\times\mathbb Z_{2} &
 |(U6)| \mathbb Z_{3}\times\mathbb R/\mathbb Z &
 |(U5)| \mathbb Z_{3} &
 |(U4)| {\color{red}\mathbb R/\mathbb Z }\\
 |(lab2)| H^{d+2}\!\big(B^{d-2}\mathbb Z_{2},\ \mathbb R/\mathbb Z\big) &
 |(L7)| \ZZ_2 &
 |(L6)| \ZZ_4 &
 |(L5)| \ZZ_2 &
 |(L4)| \ZZ_2 \\
 |(opB)| H^{d-2}(\cdot,\ZZ_2)\rightarrow H^{d+2}(\cdot,\RR/\ZZ) &
 |(b7)| \tfrac12 Sq^4(B_5) &
 |(b6)| \text{Pontryagin square} &
 |(b5)|  \tfrac14\, B_3\cup \delta B_3 &
 |(b4)| \tfrac12 B_2^3 \\
};
\draw (-9,1.75)-- (9,1.75);   
\draw (-4,-2.5) -- (-4,3);  
\path[->] (U7) edge node[lab,above]{$(1,0)\;;\; \!\left(0,\tfrac12\right)$} (U6);
\path[->] (U6) edge node[lab,above]{$1\;;\;0$} (U5);
\path[->] (U5) edge node[lab,above]{$\tfrac13$} (U4);
\path[->] (L7) edge node[lab,above]{$2$} (L6);
\path[->] (L6) edge node[lab,above]{$1$} (L5);
\path[->] (L5) edge node[lab,above]{$0$} (L4);
\path[->] (L7) edge node[lab,left]{$(0,1)$} (U7);
\path[->] (L6) edge node[lab,left]{$(0,\tfrac14)$} (U6);
\path[->] (L5) edge node[lab,left]{$0$} (U5);
\path[->] (L4) edge node[lab,left]{$\tfrac12$} (U4);
\end{tikzpicture}
\end{adjustbox}

\vspace{-0.24cm}
\begin{adjustbox}{scale=0.9}
\begin{tikzpicture}[>=Latex]
\tikzset{lab/.style={midway, font=\scriptsize, inner sep=0.3ex}}
\matrix (m) [matrix of math nodes, row sep=1.5em, column sep=3em, nodes={anchor=center}]{
 |(opT)| H^{d-2}(\cdot,\ZZ)\rightarrow H^{d+2}(\cdot,\RR/\ZZ)& |(t7)| \tfrac13 P^1(B_5)~;~ \tfrac12 Sq^4(B_5) & |(t6)| \tfrac13 P^1(B_4)~;~ \lambda B_4^{2} & |(t5)| \tfrac13 P^1(B_3) & |(t4)| \lambda B_2^{3} \\
 |(lab1)| H^{d+2}\!\big(B^{d-2}\mathbb Z,\ \mathbb R/\mathbb Z\big) &
 |(U7)| \mathbb Z_{3}\times\mathbb Z_{2} &
 |(U6)| \mathbb Z_{3}\times\mathbb R/\mathbb Z &
 |(U5)| \mathbb Z_{3} &
 |(U4)| {\color{red}\mathbb R/\mathbb Z }\\
 |(lab2)| H^{d+2}\!\big(B^{d-2}\mathbb Z_{p},\ \mathbb R/\mathbb Z\big) &
 |(L7)| 0 &
 |(L6)| \mathbb Z_{p} &
 |(L5)| \mathbb Z_{p} &
 |(L4)| \mathbb Z_{p} \\
 |(opB)| H^{d-2}(\cdot,\ZZ_p)\rightarrow H^{d+2}(\cdot,\RR/\ZZ) &
 |(b7)| 0 &
 |(b6)| \tfrac1p B_4^{2} &
 |(b5)| \tfrac1{p^2}\, B_3\cup \delta B_3 &
 |(b4)| \tfrac1p B_2^{3} \\
};
\draw (-9,2.25)-- (9,2.25);   
\draw (-4,-2.25) -- (-4,2.5);  
\path[->] (U7) edge node[lab,above]{$(1,0)\;;\; \!\left(0,\tfrac12\right)$} (U6);
\path[->] (U6) edge node[lab,above]{$1\;;\;0$} (U5);
\path[->] (U5) edge node[lab,above]{$\tfrac13$} (U4);
\path[->] (L7) edge (L6);
\path[->] (L6) edge node[lab,above]{$0$} (L5);
\path[->] (L5) edge node[lab,above]{$0$} (L4);
\path[->] (L7) edge node[lab,left]{$0$} (U7);
\path[->] (L6) edge node[lab,left]{$\left(0,\tfrac1p\right)$} (U6);
\path[->] (L5) edge node[lab,left]{$0$} (U5);
\path[->] (L4) edge node[lab,left]{$\tfrac1p$} (U4);
\end{tikzpicture}
\end{adjustbox}
    \caption{Analogous to Fig.~\ref{fig:Z3_Z_cohomology_mapping}, this diagram shows the relevant cohomology groups of $K(\ZZ, d-2)$ and $K(\ZZ_p, d-2)$ for $p=2$ and $p \geq 5$. In all cases with $d \geq 5$, the groups $H^{d+2}(B^{d-2}\ZZ_p,\RR/\ZZ)$ map trivially to the upper-right $\RR/\ZZ$. Consequently, $\mu_{56}$ detects nontrivial statistics in dimensions $\geq 5$ only for $G=\ZZ$ or $\ZZ_{3N}$.}
    \label{fig:Zp_Z_cohomology_mapping}
\end{figure}

\section{Algorithm for verifying membrane statistics}\label{sec:Algorithm for verifying membrane statistics}

It is known that a topological quantum field theory has a 't Hooft anomaly if there is a global symmetry that cannot be gauged. The 't Hooft anomaly naturally emerges on the boundary of an SPT phase, which makes it useful in characterizing SPT phases \cite{witten2016anomaly,chen2013groupcohomology}. For higher-form SPT, its boundary field theory will have a higher-form 't Hooft anomaly instead. Recent progress \cite{Kapustin2017Higher,lokman2020lattice} has found a characterization of higher-form SPT phases using the cohomology of Eilenberg-MacLane spaces. However, it is less well known how to obtain the boundary lattice theory from a given higher-form SPT phase. In the Supplemental Material, we describe a practical way to obtain the boundary topological phase, together with its anomalous symmetry and excitation operators, from a given higher-form SPT phase. In addition, we propose an algorithm that calculates the 56-step process of hopping operators in the SPT boundary. By implementing this algorithm on the computer, we can numerically verify that the 56-step process correctly detects the statistics of both the Pontryagin power in (5+1)D and the Steenrod reduced power in higher dimensions. 

\subsection{Simplicial formulation of higher-form SPT phase}

Firstly, we introduce a concrete way of representing an SPT phase. Consider a higher-form SPT phase characterized by an element in $H^D(K(G,n),\RR/\ZZ)$. According to the simplicial construction of $K(G,n)$, we find that this cohomology element could be realized at the cochain level, that is, we have a map
\begin{eqs}
    T: C^n(\Delta^D;G)\longrightarrow C^D(\Delta^D;\RR/\ZZ) 
\end{eqs}
in which $\Delta^D$ stands for any $D$-simplex. This operation could be made universal by performing it on each $D$-simplex of a simplicial complex $\Sigma$. As a result, we obtain a local operation
\begin{eqs}
    T: C^n(\Sigma;G)\longrightarrow C^D(\Sigma;\RR/\ZZ)
\end{eqs}
that maps an $n$-form $B$ to a $D$-form $T[B]$ on $\Sigma$. 

As $T[B]$ represents a cohomology operation, it has the property that $\delta T[B]=0$ whenever $B$ is a closed form. In addition, we have that $T[B]$ and $T[B+\delta b]$ represents the same cohomology class for any $(n-1)$-form $b$. When $\Sigma$ is a triangulated manifold, the integration of $T[B]$ on $\Sigma$ is understood as the SPT action. 

Our choice of the cochain-level operation $T$ has certain ambiguities in that we can modify $T$ by a full differential without changing the SPT phase it represents. Therefore, we have an equivalence principle 
\begin{eqs}
    T[B]\sim T[B]+\delta \Psi[B]
\end{eqs}
for any $\Psi:C^{n}(\Delta^{D-1};G)\longrightarrow C^{D-1}(\Delta^{D-1};\RR/\ZZ)$~. 

\subsection{Gauge invariant field theory on triangulated spacetime}

Now, let us describe the boundary theory of a given higher SPT phase described by a local operation $T[B]$. In general, this boundary theory will not have a local action of $B$. However, if we take $B=\delta b$, then we can describe the boundary theory using a local action of $b$. Specifically, we construct a local operation 
\begin{eqs}
    \Phi:C^{n-1}(M;G)\longrightarrow C^{D-1}(M;\RR/\ZZ)
\end{eqs}
such that
\begin{eqs}\label{basic1}
    T[\delta b]=\delta \Phi[b]~.
\end{eqs}
We describe a specific construction of $\Phi$: Given a $(D-1)$-simplex $\Delta^{D-1}$ together with $b$ on it, we take the cone with $\Delta^{D-1}$ as the bottom surface, and obtain $C\Delta^{D-1}\cong \Delta^D$. Construct a field $\tilde{b}$ such that $\tilde{b}=b$ on the bottom surface $\Delta^{D-1}$ and $\tilde{b}=0$ otherwise, then we can define $\Phi$ as follows
\begin{eqs}\label{tphi}
    \langle\Phi[b], \Delta^{D-1}\rangle=\langle T[\delta \tilde{b}], C\Delta^{D-1}\rangle. 
\end{eqs}
In order to verify Eq.~\eqref{basic1}, we take a $D$-manifold $\Sigma$ with closed boundary $\partial\Sigma=M$. Take the cone $C(M)$ with reverse orientation and glue it with $\Sigma$ through $M$, we will get a closed $D$-manifold. Now, as $T[\delta b]$ is an exact form, its integral on this closed manifold is zero, which means
\begin{eqs}
    \langle T[\delta b], \Sigma\rangle-\langle T[\delta b], C(M)\rangle=0~,
\end{eqs}
and this is equivalent to 
\begin{eqs}
    \langle T[\delta b], \Sigma\rangle=\langle \Phi[b], M\rangle
\end{eqs}
due to Eq.~(\ref{tphi}). Finally, the result follows from the Stokes theorem. We denote this way of constructing $\Phi$ as cone construction. 

The functional $\Phi[b]$ has a clear physical meaning. First, it describes the ground state of the SPT phase, 
\begin{eqs}
    \ket{\Psi_{\mathrm{SPT}}}=\sum_{b}\exp(2\pi i\int_M \Phi[b])\ket{b}~,
\end{eqs}
in which $b$ represents the physical degrees of freedom on the $(D-1)$-manifold $M$.
Alternatively, one may view $M$ as the background spacetime manifold of a gauge-invariant topological field theory with action 
\begin{eqs}
    S_M[b]=\int_M \Phi[b]~.
\end{eqs}
This action is invariant under the gauge transformation $b\mapsto b+\delta\eps$.
When the spacetime manifold $M$ has a spatial boundary $N=\partial M$, we can obtain a topological phase on $N$.
In this case, the $\Phi[b]$ for closed $b$ gives the properties of its gapped ground state, while in general $\Phi[b]$ gives the properties of low-energy excitations supported on $B=\delta b$. We will elaborate further on this point in the following subsections. 

Our choice of $\Phi$ also has ambiguities, and different $\Phi$ will describe the same topological phase on $N$ according to the equivalence principle below: 
\begin{eqs}
    \Phi[b]\sim \Phi[b]+\Psi[B]+\delta \Gamma[b]
\end{eqs}
for any $\Psi:C^{n}(\Delta^{D-1};G)\longrightarrow C^{D-1}(\Delta^{D-1};\RR/\ZZ)$ and $\Gamma:C^{n-1}(\Delta^{D-2};G)\longrightarrow C^{D-2}(\Delta^{D-2};\RR/\ZZ)$. 

\subsection{Stabilizer models and anomalous symmetry}

Now we are equipped with a gauge invariant field theory $\Phi[b]$ on a triangulated spacetime manifold $M$. However, to apply the statistical process, it is better to derive a Hamiltonian model for the topological phase on the boundary spatial manifold $N=\partial M$. We take a cobordism point of view, in which quantum states are defined on $(D-2)$-dimensional spatial manifolds, and the evolution is decided by a $(D-1)$-dimensional spacetime manifold between two spatial manifolds. Because we only care about local properties, we choose the spatial manifold $N$ to be $S^{D-2}=\partial B^{D-1}$, where the spacetime manifold $M=B^{D-1}$ is the $(D-1)$-dimensional ball. In this case, there is a unique ground state $\ket{\Psi_T}$ for the topological phase. 

We determine the ground state using a discrete path integral of all closed classical configurations $b$ on the spacetime manifold. Formally, we have
\begin{eqs}
    \ket{\Psi_T}=\sum_{\delta b=0} \exp(2\pi i \int_{M} \Phi[b])\ket{b_{N}}~,
\end{eqs}
in which $b_{N}$ is the restriction of $b$ on $N$. To define the stabilizers, we first introduce two types of operators. The first one is the $Z$ operator $Z(\alpha)$ defined for each chain $\alpha\in C_{n-1}(N,G)$. For each generator $e_j$ of $\ZZ_l\subset G$ and simplex $\Delta\subseteq N$, we get $Z(e_j\Delta)$ by putting an ordinary $l$-qudit $Z$ operator on $\Delta$. In general we require that $Z(\alpha)$ is a homomorphism, that is,
\begin{eqs}
    Z(\alpha_{jk}e_j\Delta_k)=\prod Z(e_j\Delta_k)^{\alpha_{jk}}~.
\end{eqs}
Another one is the $X$ operator $X(b)$ for each cochain $b\in C^{n-1}(N,G)$, satisfying 
\begin{eqs}
    X(b)\ket{b_0}=\ket{b+b_0}~.
\end{eqs}
We would like to construct mutually commute stabilizers that have $\ket{\Psi_T}$ as their common eigenstate with eigenvalue 1. One way to do this is to restrict to the subspace of all closed configurations, and then impose phase difference conditions among those configurations. In the following, we describe how to obtain the stabilizers from this method. 

To impose the condition $\delta b=0$, we use the $Z$-plaquette operators $Z(\partial \gamma)$, where $\gamma\in C_n(N,G)$ takes value from all the generators of $C_n(N,G)$. The condition $Z(\partial \gamma)=1$ implies that $\langle b,\partial\gamma\rangle=0=\langle\delta b,\gamma\rangle$ for arbitrary $\gamma$, which gives $\delta b=0$. Now, the difference of any two closed configurations is exact, so we need to compute the phase difference between $\ket{b}$ and $\ket{b+\delta\eps}$ for $\eps\in C^{n-2}(N,G)$. Suppose that we have fixed an extension of $b$ and $\eps$ into $M$, then the phase difference is
\begin{eqs}
    \Delta \theta=\exp(2\pi i\int_{M} \Phi[b+\delta\eps]-\Phi[b])~.
\end{eqs}
We claim that the integral value depends only on the configuration of $b$ and $\eps$ at the boundary. To prove this, we glue two balls $B_+^{D-1}$ and $B_-^{D-1}$ together along their boundaries to form a sphere $S^{D-1}$. Using the gauge invariance of $\Phi[b]$, we obtain
\begin{eqs}
    \int_{B_+^{D-1}\cup B_-^{D-1}} \Phi[b+\delta\eps]-\Phi[b]=0~, 
\end{eqs}
which implies that
\begin{eqs}
    \int_{B_+^{D-1}} \Phi[b+\delta\eps]-\Phi[b]=-\int_{B_-^{D-1}} \Phi[b+\delta\eps]-\Phi[b]~. \label{F1 equals to F2}
\end{eqs}
Therefore, both sides of Eq.~\eqref{F1 equals to F2} depend on the value of $b$ and $\eps$ at the boundary $B_+^{D-1}\cap B_-^{D-1}=S^{D-2}$. 

From the above argument, we find that there is a local functional $F[b,\eps]$ satisfying 
\begin{eqs}
    \int_{M}\Phi[b+\delta\eps]-\Phi[b]=\int_{N} F[b,\eps]~. \label{dPhi eq F}
\end{eqs}
According to Eq.~\eqref{dPhi eq F}, different closed configurations are related by $X$-type stabilizers of the form
\begin{eqs}
    W(\eps)=X(\delta\eps)\Lambda(\eps)~,\label{W definition}
\end{eqs}
where $\Lambda(\eps)$ is a diagonal operator satisfying
\begin{eqs}
    \Lambda(\eps)\ket{b}=\exp(2\pi i\int_N F[b,\eps])\ket{b}~.
\end{eqs}
Note that the $W$ operators satisfy the condition 
\begin{equation}
    W(\eps_1+\eps_2)=W(\eps_1)W(\eps_2)
\end{equation}
and
\begin{eqs}
    W(\delta\eta)=\mathbb{I}
\end{eqs}
for all $\eta\in C^{n-3}(N,G)$. This means that the $W$ operators can be seen as an anomalous $(n-1)$-form symmetry on $N$. Its anomaly is characterized by the SPT phase $T[B]$. 

The integral relation \eqref{dPhi eq F} can be transformed into a relation of cochains. Using the Stokes theorem, we get
\begin{eqs}
    \Phi[b+\delta\eps]-\Phi[b]=\delta F[b,\eps]~. \label{dPhi eq F diff}
\end{eqs}
This is useful when $\Phi[b]$ is constructed with cup products. 

\subsection{Excitation operators}\label{subsec:Excitation operators}

The low-energy excitations of a lattice theory are created by particular patch operators that commute with the stabilizers inside its support region and violate some of them at the boundary. The type of excitation is determined by the class of stabilizers that its patch operator violates. For the boundary topological phase of an SPT that we constructed above, there are two types of excitation, namely, the hopping operators that violates $Z$-type stabilizers and the flux excitations that violates $X$-type stabilizers. 

It is easily seen that the flux excitation operators are just $Z(\alpha)$ with $\alpha\in C_{n-1}(N,G)$, which violates $X$-type stabilizers on $\partial \alpha$. Therefore, these operators generate bosonic excitations that have trivial statistics. On the other hand, the hopping operators are more complicated and have non-trivial statistics. In the following, we describe a systematic way of constructing them. 

Suppose that the operators supported on $h\in C^{n-1}(N,G)$ have the general form of 
\begin{eqs}
    U_h\ket{b}=\exp(2\pi i\int_M L[b,h])\ket{b+h}~\label{U definition}
\end{eqs}
in which $L[b,h]$ is a local functional to be determined. We expect $U_h$ to create excitations on $\delta h$, which means that
\begin{eqs}
    U_hW(\epsilon)=W(\epsilon)U_h~\label{U commu with W}
\end{eqs}
for all $\eps\in C^{n-2}(N,G)$. Substituting Eqs.~\eqref{W definition} and \eqref{U definition} into Eq.~\eqref{U commu with W}, we obtain 
\begin{eqs}
    \int_N L[b,h]+F[b+h,\eps]-L[b+\delta\eps,h]-F[b,\eps]=0~.\label{LF integral}
\end{eqs}
Now, applying the Stokes theorem to Eq.~\eqref{LF integral}, we obtain
\begin{eqs}
    \delta L[b,h]+\delta F[b+h,\eps]-\delta L[b+\delta\eps,h]-\delta F[b,\eps]=0~.\label{LF diff}
\end{eqs}
Combining Eqs.~\eqref{dPhi eq F diff} and \eqref{LF diff} together, 
\begin{eqs}
    \delta L[b,h]-\Phi[b+h]+\Phi[b]=\delta L[b+\delta\eps,h]-\Phi[b+h+\delta\eps]+\Phi[b+\delta\eps]~.\label{dL min Phi}
\end{eqs}
Eq.~\eqref{dL min Phi} implies that the value of $\delta L[b,h]-\Phi[b+h]+\Phi[b]$ is invariant with regard to the transformation $b\mapsto b+\delta \eps$. As a result, there is a local functional $\Theta[B,h]$ satisfying
\begin{eqs}
    \delta L[b,h]-\Phi[b+h]+\Phi[b]=-\Theta[\delta b,h]~.\label{def of L}
\end{eqs}
Taking the differential of both sides, we find that 
\begin{eqs}
    \delta\Theta[B,h]=T[B+\delta h]-T[B]~.\label{def of Theta}
\end{eqs}
Finally, the excitation operators are defined by Eqs.~\eqref{U definition}, \eqref{def of L}, and \eqref{def of Theta}. 

While we can derive $L[b,h]$ from Eq.~\eqref{def of L} by cone construction, the same method cannot be applied to $\Theta[B,h]$ due to the restriction $\delta B=0$. One way of constructing $\Theta[B,h]$ from Eq.~\eqref{def of Theta} is as follows. Suppose that we have fixed the value of $B$ and $h$ on a $(D-1)$-simplex $\Delta^{D-1}$, then
\begin{enumerate}
    \item Construct a cylinder $\Delta^{D-1}\times I$ with canonical triangulation. 
    \item We take the pullback $\pi^* B$ of $B$ onto $\Delta^{D-1}\times I$, where $\pi:\Delta^{D-1}\times I\to \Delta^{D-1}$ is the projection. Because pullback commutes with differential, we obtain that $\pi^* B$ is closed on $\Delta^{D-1}\times I$. 
    \item Define the embedding $\iota_* h$ to be equal to $h$ on the bottom surface $\Delta^{D-1}\times \{1\}$ and zero elsewhere. 
    \item The value of $\Theta$ on $\Delta^{D-1}$ is defined to be 
    \begin{eqs}
        \Theta[B,h](\Delta^{D-1})=\int_{\Delta^{D-1}\times I}T[\pi^* B+\delta(\iota_* h)]~,
    \end{eqs}
    where $\pi^* B$ and $\iota_* h$ are defined above. 
\end{enumerate}
It can be shown that the $\Theta[B,h]$ constructed in this way satisfies Eq.~\eqref{def of Theta}.

Let us calculate $\Theta[B_2,h_1]$ for $T[B_2]=\tfrac{1}{2} B_2 \cup B_2 \cup B_2$ as an example. We label the vertices of $\Delta^5\times I$ as $0,1,\cdots,5$ on the bottom surface and $\bar{0},\bar{1},\cdots,\bar{5}$ on the top surface. The canonical triangulation is given as follows: 
\begin{eqs}
    \Delta^5\times I&=\langle0,1,2,3,4,5,\bar{5}\rangle-\langle0,1,2,3,4,\bar{4},\bar{5}\rangle+\langle0,1,2,3,\bar{3},\bar{4},\bar{5}\rangle\\
    &-\langle0,1,2,\bar{2},\bar{3},\bar{4},\bar{5}\rangle+\langle0,1,\bar{1},\bar{2},\bar{3},\bar{4},\bar{5}\rangle-\langle0,\bar{0},\bar{1},\bar{2},\bar{3},\bar{4},\bar{5}\rangle~,
\end{eqs}
in which the sign of each simplex represents its orientation. Summing over $T[\pi^* B_2+\delta(\iota_* h_1)]$ for each simplex, we obtain
\begin{eqs}
    \Theta[B_2,h_1](\langle0,1,2,3,4,5\rangle)&=(B_2+\delta h_1)(\langle0,1,2\rangle)~(B_2+\delta h_1)(\langle2,3,4\rangle)~h_1(\langle 4,5\rangle)\\
    &+(B_2+\delta h_1)(\langle0,1,2\rangle)~h_1(\langle 2,3\rangle)~B_2(\langle 3,4,5\rangle)\\
    &+h_1(\langle 0,1\rangle)~B_2(\langle1,2,3\rangle)~B_2(\langle3,4,5\rangle )~,\label{Theta012345}
\end{eqs}
which can be written in a compact form using cup products, 
\begin{eqs}
    \Theta[B_2,h_1]&=(B_2+\delta h_1)\cup (B_2+\delta h_1)\cup h_1
    +(B_2+\delta h_1)\cup h_1\cup B_2
    +h_1\cup B_2\cup B_2~.
\end{eqs}
It is equivalent to the $\Theta$ that we could obtain by writing $T[B_2+\delta h_1]-T[B_2]$ as an exact differential. 

\subsection{Calculating the statistics of hopping operators}

To obtain the statistics of the membrane excitations on the SPT boundary, we need to substitute the hopping operators described in Eq.~\eqref{U definition} into our 56-step process. However, an explicit calculation of $L[b,h]$ tends to be complicated. We avoid this complexity by introducing an equivalent expression for $U_h$,  
\begin{eqs}
    U_h=X_h\exp \!\left( 2\pi i \int_{N} -\Theta[\delta b,h]+\Phi[b+h]-\Phi[b]\right)~, 
\end{eqs}
which is derived using the Stokes theorem and Eq.~\eqref{def of L}. 

Note that the term $\Phi[b+h]-\Phi[b]$ gives the difference of $\Phi$ between the final state and the initial state. When we take the sum over the 56-step process, this term vanishes because the state $\ket{b}$ goes through a closed circle. Therefore, we obtain the same result by using the modified excitation operator
\begin{eqs}
    \tilde{U}_h=X_h\exp \!\left( 2\pi i \int_{N} -\Theta[\delta b,h]\right)~.\label{modified excitation}
\end{eqs}
To simplify the calculation, we assume that the manifold $N$ is just the $(D-1)$-simplex $\Delta^{D-1}$. Under this assumption, the integral contains only one term of $\Theta[\delta b,h]$, and we can add 56 terms of $\Theta$ together to obtain the final result. The function $\Theta[B,h]$ is obtained from the SPT action $T[B]$ using the algorithm mentioned above. 

Our original definition of the 56-step process uses 2-chains for excitations. To get the process for excitations labeled by cochains, we need to perform a duality on $\Delta^{D-1}$. In particular, for each $p$-chain $\langle i_0 i_1\cdots i_p\rangle$, its dual cochain is the delta function supported on the complement simplex $\langle j_0j_1\cdots j_{D-2-p}\rangle$, with a sign factor given by the product of $(-1)^{i_n}$. This sign factor ensures that our duality takes the boundary operator $\partial$ of chains to the coboundary operator $\delta$ of cochains. 

In summary, the algorithm for verifying membrane statistics for $T[B]$ proceeds as follows: 
\begin{enumerate}
    \item Calculate the function $\Theta[B,h]$ from $T[B]$ using the algorithm in SM~Sec.~\ref{subsec:Excitation operators}. 
    \item Obtain the 56-step process for cochains by taking duality on $\Delta^{D-1}$. 
    \item Substitute the modified excitation operators from Eq.~\eqref{modified excitation} into the process. 
    \item Verify that $\mu_{56}$ generates the statistics as expected. 
\end{enumerate}
The result is summarized in Table \ref{table: output mu56} of the main text. 

\vfill

\end{document}